\newtheorem{theorem}{Property}
\newtheorem{corollary}{Corollary}[theorem]
\begin{document}

	\title{Low-ground/High ground capacity regions analysis for Bosonic Gaussian Channels}
		\author{Farzad Kianvash}	\email{farzad.kianvash@sns.it}

		\affiliation{NEST, Scuola Normale Superiore and Istituto Nanoscienze-CNR, I-56126 Pisa, Italy}
	\author{Marco Fanizza}
	\affiliation{{F\'{\i}sica Te\`{o}rica: Informaci\'{o} i Fen\`{o}mens Qu\`{a}ntics, Departament de F\'{i}sica, Universitat Aut\`{o}noma de Barcelona, ES-08193 Bellaterra (Barcelona), Spain}. }
	\affiliation{NEST, Scuola Normale Superiore and Istituto Nanoscienze-CNR, I-56126 Pisa, Italy}
	\author{Vittorio Giovannetti}	\affiliation{NEST, Scuola Normale Superiore and Istituto Nanoscienze-CNR, I-56126 Pisa, Italy}
	\date{\today}

	\begin{abstract}
We present a comprehensive characterization of the interconnections between single-mode, phase-insensitive Gaussian Bosonic Channels resulting from channel concatenation. This characterization enables us to  identify, in  the parameter space of these maps, two distinct regions: low-ground and high-ground. In the low-ground region, the information capacities are smaller than a designated reference value, while in the high-ground region, they are provably greater. As a direct consequence, we systematically outline an explicit set  of upper bounds for the quantum and private capacity of these maps, which combine known upper bounds and composition rules,  improving upon existing results.	\end{abstract}
	
	\pacs{03.67.-a, 03.67.Ac, 03.65.Ta.}
	\maketitle

\section{Introduction}
The efficiency of classical communication lines can be expressed using a single, simple formula~\cite{shannon1,shannon2}. However, when it comes to quantum communication lines (quantum channels) that utilize quantum systems as information carriers instead of classical signals~\cite{HOL BOOK, WILDE BOOK,VGHOL,BENSHOR}, this simplification no longer holds. Instead, a multitude of different and computationally challenging capacity functionals are required to fully assess the quality of these transmission lines. For instance, the classical capacity  of a quantum channel,  characterizes  the optimal rate of classical bits that can be reliably transferred per channel uses; the 
quantum capacity instead provides  the optimal rate of transmitted qubits, and the
private capacity,  the optimal rate of bits that can be transmitted privately along the channel.
 In our study, we specifically focus on a special class of quantum communication lines known as Gaussian Bosonic Channels (GBCs), which are commonly employed to model communication procedures utilizing the electromagnetic field as the carrier of transmitted messages~\cite{serafini,CAVES,HOLEVO01,ADVS}.
Despite significant progress made in recent years, the analysis of GBCs still presents complex challenges. Specifically, computing the exact values of certain information capacities for these maps requires optimization techniques that remain difficult to tackle. In principle, calculating these quantities necessitates taking the limit of properly regularized entropic functionals, considering the potential utilization of entanglement across multiple channel uses~\cite{S,Lioyd,Q CAP DEV,privCWY,privDS,shor superadditivity DC, Di vincenzo superadditivity DC, smith superadditivity, Fern superadditivity, q superadditivity, gaussian q superadditivity, Cubit superadditivity, dephrasure, bl superadditivity,c superadditivity,p superadditivity,tradeoff1,tradeoff2, vikesh}.
Given these complexities, the derivation of upper and lower bounds for the capacities of significant channels represents crucial progress in the field.

An established strategy for upper bounding information capacities involves utilizing data processing inequalities. For instance, it is possible to obtain an upper bound for the capacity of a specific channel by expressing it as a concatenation of channels whose capacities are already known or upper bounded \cite{SS UPP B DEP,Ouyang,SUTT UPP B DEP, lownoiseQ, distdepo,matteo, swat, flagged channel 1,flagged channel 2,wang,flagged channel 3,naj}. In this article, we employ this method to enhance the previously established bounds in the literature \cite{matteo,Pirandola upp bound,naj,flagged channel 3} for the quantum and private capacity of single-mode, phase-insensitive Gaussian Bosonic Channels (PI-GBCs). To achieve this result, we present a detailed decomposition of the parameter space of PI-GBC maps into  regions encompassing all channels that can simulate a given channel through concatenation with other PI-GBC elements.

The structure of the article is as follows: 
In Sec.~\ref{prel} we introduce the fundamental concepts and notation used in this article. We start by giving an overview of continuous variable quantum channels and establish crucial notation. Then, we briefly discuss various quantum capacities of a quantum channel, namely quantum capacity, private capacity, two-way quantum capacity, and secret-key capacity. Following that, we introduce phase-insensitive one-mode GBCs and establish specific notation to aid us throughout the manuscript. In Sec.~\ref{rev q cap} we provide a concise review of the current state-of-the-art bounds for the quantum capacity of GBCs. We discuss the main techniques used to derive these bounds, including the use of data processing inequalities and channel concatenation. We also highlight the key challenges that remain in computing the exact quantum capacity for these channels.
In Sec.~\ref{l and h reg}  we study the parameter space of PI-GBCs in terms of channel concatenation. We present a detailed decomposition of the parameter space of PI-GBC maps into regions encompassing all channels that can simulate a given channel through concatenation with other PI-GBC elements. This analysis allows us to identify the channels that can be used to upper bound the quantum capacity of a given PI-GBC.
In Sec.~\ref{sec:stab} we derive new upper bounds for the quantum and private capacity of single-mode, phase-insensitive Gaussian Bosonic Channels (PI-GBCs) by employing the channel concatenation method discussed in Sec.~\ref{l and h reg}. We demonstrate that our new bounds improve upon the previously established bounds in the literature, providing a more accurate estimation of the capacities for these channels.
Sec.~\ref{conc} concludes the manuscript.

\section{Preliminaries}\label{prel}
A quantum communication line connecting two distant parties
can be seen as a physical transformation that associates  the states of a system $A$, representing the input messages of the model,  with the  states of a 
second system $B$, representing the associated output messages. 
At mathematical level such an object is described as a completely positive trace preserving (LCPTP) linear map 
$\Lambda: \mathcal{B}_1(\mathcal{H}_A) \mapsto \mathcal{B}_1(\mathcal{H}_B)$ that links the set of the  trace-class  operators
of the  (possibly infinite dimensional) Hilbert spaces $\mathcal{H}_A$, $\mathcal{H}_B$ associated with 
$A$ and $B$ respectively~\cite{HOL BOOK, WILDE BOOK}.
By Stinespring representation we can always express $\Lambda$ as a reduction of an isometry $\hat{V}$ that connects ${\cal H}_A$ to an extension 
${\cal H}_{BE}$ of $\mathcal{H}_B$, i.e. 
$\Lambda(\cdots)=\Tr_E (\hat V \cdots {\hat V}^\dagger)\,$, with $\Tr_E$ being the partial trace with respect to $E$. 
 Such a construction allows us to introduce the notion of complementary channel $\tilde{\Lambda}:\mathcal{B}_1(\mathcal{H}_A) \mapsto \mathcal{B}_1(\mathcal{H}_E)$ defined as  $\tilde{\Lambda}(\cdots):=\Tr_A (\hat V\cdots {\hat V}^\dagger)$, which can be interpreted as the transformation 
 induced on the environment of the communication line by the signaling process~\cite{degradability}. 
 
 Similarly to what happens in classical information theory, the efficiency of a quantum channel $\Lambda$ can be gauged in terms of 
 a series figures of merit (the quantum capacities of the channel) that evaluate the optimal ratio between the amount of data which can be sent reliably through the channel and the total amount of redundancy needed to achieve such a goal~\cite{HOL BOOK, WILDE BOOK,VGHOL,BENSHOR}. 
In this paper we focus on special instances  of such quantities which in the context of continuous variable quantum information processing (see next section), admit optimal finite values even when allowing unbounded energy resources, i.e. 
the quantum capacity $Q(\Lambda)$, the private capacity $P(\Lambda)$, the two-way quantum capacity $Q_2(\Lambda)$, and the secret-key  capacity $K(\Lambda)$~\cite{WILDE BOOK}. They are hierarchically ordered via the inequalities
\begin{eqnarray}K(\Lambda) \geq Q_2(\Lambda),P(\Lambda)\geq  Q(\Lambda)\;. 
\end{eqnarray} 
The smallest among such terms, i.e. $Q(\Lambda)$, measures the maximum rate at which the communication line can transmit quantum information reliably over asymptotically many uses of the channel~\cite{S,Lioyd,Q CAP DEV}; $P(\Lambda)$ is instead the maximum rate at which we can transmit classical messages through the channel $\Lambda$ 
in such a way that an external party that is monitoring the line, will not be able to read such messages~\cite{Q CAP DEV};
$Q_2(\Lambda)$  represents the maximum quantum information transmission  rate attainable by allowing the communicating party to
use (arbitrary) distillation protocols through the use of a classical side-channel~\cite{BENNETT1}; and finally  the largest  of these quantities, i.e. $K(\Lambda)$,
 describes the maximum rate at which two parties can use the channel to distill secret random string of bits. 

 Despite being operationally well defined, no universal  formula is known that allows one to explicitly compute the values  of $Q_2(\Lambda)$ and $K(\Lambda)$ as entropic functionals. On the contrary,  such characterizations  exist for $Q(\Lambda)$ and for $P(\Lambda)$,
 based on regularized optimizations of , respectively, the output coherent information for $Q$, and the Holevo information gap between $\Lambda$ and its complementary map $\tilde{\Lambda}$, for $P$.
 Even in these cases, however, the explicit computation of $Q(\Lambda)$ and $P(\Lambda)$  is typically rather challenging and has been carried out only a very limited set of models (in particular for the special classes of degradable and anti-degradable maps). 
 A possible way to 
 circumvent this problem is to make use of data-processing inequalities. 
Specifically a simple resource counting argument can be invoked to observe that, if a quantum channel $\Lambda$ can be expressed
in terms of a LCPTP map $\Lambda'$ via the  
 concatenated action of other two LCPTP linear maps 
 $\Lambda_1$, $\Lambda_2$, then the following relations hold 
 \begin{eqnarray} \label{data} 
\Lambda = \Lambda_2 \circ\Lambda' \circ  \Lambda_1 \quad \Longrightarrow \quad 
{\cal K}(\Lambda)\leq  {\cal K}(\Lambda') \;,
 \end{eqnarray} 
 where hereafter we shall use the symbol ${\cal K}$ to represent an arbitrary capacity (e.g. $Q$,  $P$, $Q_2$, or $K$)~see e.g.~\cite{HOL BOOK, WILDE BOOK,VGHOL}.
Accordingly if the capacity values of $\Lambda_1$ or $\Lambda_2$ are known, or if upper bounds for those quantities are available,  
we can then use (\ref{data}) to constraint 
 the performances of $\Lambda$. Alternatively, if instead the capacity of $\Lambda$ is known or if a lower bound for it
 is available, we can use
(\ref{data}) to provide lower bounds for those of $\Lambda_1$ and $\Lambda_2$. 
In what follows, we shall make use of this simple idea to improve the capacity analysis of a special class of quantum maps which plays an important role in quantum information theory, that is the Bosonic Gaussian Channels  set, whose properties are briefly reviewed in the next subsection.

\subsection{Bosonic Gaussian Channels} \label{sec:BGC} 
Bosonic Gaussian Channels (BGCs) model a vast collection of noise models that tamper
communication schemes which rely on  the uses of e.m. signals~\cite{serafini,VGHOL}.
Formally they can be introduced as a special set of LCPT transformations which act on the Hilbert space $L^2(\mathbb R^n)$ of the square integrable functions, representing the states of $n$ independent harmonic oscillators each corresponding to an individual mode of the field. 
Indicating with $\mathbf{\hat{r}}:=(\hat{x}_1,\hat{p}_1,...,\hat{x}_n,\hat{p}_n)^\text{T}$  the set of canonical position and momentum operators of the modes,
the action of a BGC map  can  be assigned in terms of linear mappings
they induce on the  first and second canonical momenta of the quantum states  $\hat{\rho}\in \mathfrak{S}(L^2(\mathbb R^n))$ of the model, i.e. the $2n$-dimensional real vector $\mathbf{m}:=\Tr(\mathbf{\hat{r}}\hat\rho)$
and the $2n\times 2n$ real matrix $V:=\Tr(\{\mathbf{(\hat{r}-m),(\hat{r}-m)}^\text{T}\}\hat\rho)$. 

For what it concerns the present work we shall limit the analysis to  a special subset of single-mode ($n=1$) Phase Insensitive  GBCs (or PI-GBCs in brief) formed by the
maps 
${\Phi}_{x,M}$ 
characterized by two positive noise parameters $x,M\geq 0$, whose 
action on  the system is fully determined by the transformations
 \begin{align}\label{defgenerale}
\begin{cases} 
&\mathbf{m}\xrightarrow{{\Phi}_{x,M}} \mathbf{m}'=\sqrt{x}\; {\bf m}\;,\\ \\
&V\xrightarrow{{\Phi}_{x,M}}V'=x  V + (2M+|1-x|)I_2\; ,
\end{cases} 
\end{align} with $\mathbf{m}'$ and $V'$ being respectively the first and second momenta of the output state ${\Phi}_{x,M}(\hat{\rho})$, and 
 with $I_2$ the $2\times 2$  identity matrix. 
For $x=\eta\in [0,1]$, and $M= (1-\eta)N$ with $N\geq 0$, the mapping~(\ref{defgenerale})
 corresponds to   the thermal attenuator channel $\mathcal{E}_{\eta,N}$
which 
describes the interaction of the a single-mode of the e.m. field  with an external
 thermal reservoir with $N$ mean photon number, mediated by a beam-splitter coupling of transmissivity $\eta$; for $x=g\geq 1$ and $M=(g-1) N$ with $N\geq 0$ instead, 
 ${\Phi}_{x,M}$ reduces to a thermal amplifier ${\cal A}_{g,N}$ which describes
the interaction 
with the input mode with  a thermal bath of mean photon number $N$, through a two mode squeezing operator with gain parameter~$g$; finally for $x=1$ and $M=N\geq 0$, $\Phi_{x,M}$  reduces to the 
 the additive classical noise GBC ${\cal N}_N$, i.e. 
\begin{equation} \label{attamp} 
\left\{ \begin{array}{llr} 
{\cal E}_{\eta,N} :={\Phi}_{x=\eta,M=(1-\eta)N},  & &\eta\in [0,1], N\geq 0 \;, \\ \\
{\cal N}_{N}:={\Phi}_{x=1,M=N}, &&  N\geq 0\;. \\\\
{\cal A}_{g,N}:={\Phi}_{x=g,M=(g-1)N}, && g\geq 1, N\geq 0  
 \;.\end{array} \right.
\end{equation} 

		\begin{center} 		
			\begin{table*}[t!]
\begin{tabular}{|c|ll|}
\hline
$(\mathbf{C_1})$ &$\mathcal{E}_{\eta_2,N_2}\circ \mathcal{E}_{\eta_1,N_1} =\mathcal{E}_{\eta_3,N_3}$ &
$\Big\{ \begin{array}{l} 
\eta_3= \eta_2\eta_1\\
(1-\eta_3) N_3 =  (1-\eta_2)N_2 + (1-\eta_1)\eta_2 N_1
\end{array} 
$
  \\    
	\hline  
	$(\mathbf{C_2})$ &  ${{\cal A}_{g_2,N_2}}\circ {{\cal A}_{g_1,N_1}} ={{\cal A}_{g_3,N_3}}$ &
$\Big\{  \begin{array}{l} 
g_3= g_2g_1\\
(g_3-1) N_3= (g_2-1)N_2 +(g_1-1)g_2 N_1
\end{array} $ 
\\
 \hline
$\begin{array}{c}(\mathbf{C_{3.1}})\\\\\\(\mathbf{C_{3.2}}) \end{array} 
$ &$\mathcal{E}_{\eta_2,N_{2}}\circ{\cal A}_{g_1,N_{1}} = 
\left\{ \begin{array}{l}
 \mathcal{E}_{\eta_3,N_3} \\ \\ \\
{\cal A}_{g_3,N_3} 
 \end{array} \right.$ &
$  \begin{array}{l}
\Big\{ \begin{array}{l}
\eta_3 =\eta_2 g_1  \\ 
{(1-\eta_3)(2N_3+1)=(1-\eta_2)(2N_{2}+1)+(\eta_3-\eta_2)(2N_{1}+1)}
\end{array} \\ \\
\Big\{ \begin{array}{l}  g_3 =\eta_2 g_1 \\  
{(g_3-1)(2N_3+1)=  (1-\eta_2)(2N_2+1)+ (g_3-\eta_2)(2N_1+1)} 
\end{array} 
\end{array} $\\
\hline
$\begin{array}{c}(\mathbf{C_{4.1}})\\\\\\(\mathbf{C_{4.2}}) \end{array} 
$ & ${\cal A}_{g_2,N_{2}} \circ \mathcal{E}_{\eta_1,N_{1}} = 
\left\{ \begin{array}{l}
 \mathcal{E}_{\eta_3,N_3} \\ \\ \\
{\cal A}_{g_3,N_3} 
 \end{array} \right.$ &
$\begin{array}{l}
\Big\{ \begin{array}{l} 
\eta_3 =g_2 \eta_1   \\ 
{(1-\eta_3)(2N_3+1)= (g_2-1)(2N_2+1) + (g_2-\eta_3)(2N_1+1)}
\end{array} \\ \\
\Big\{ \begin{array}{l}  g_3 =g_2 \eta_1  \\  
{(g_3-1)(2N_3+1)= (g_2-1)(2N_2+1) + (g_2-g_3)(2N_1+1)}\\
\end{array} 
\end{array} $\\
\hline
\end{tabular}
				\caption{Composition rules~(\ref{defNnew0}) and (\ref{defNnew}) expressed in terms of thermal attenuators and amplifiers via the identities~(\ref{attamp}).
				\label{tab1}} 
			\end{table*}
		\end{center} 	
 It is easy to check that the maps
${\Phi}_{x,M}$ are closed under concatenation, 
specifically given $(x_1,M_1), (x_2,M_2)\in  \mathbb{R}_+^2$ we have
that 
\begin{eqnarray} \label{defNnew0} 
{{\Phi}_{x_3,M_3}} = {{\Phi}_{x_2,M_2}}\circ {{\Phi}_{x_1,M_1}}\;, 
\end{eqnarray} 
is also a channel of the model with noise parameters $(x_3,M_3)\in  \left(\mathbb{R}^{+} \right)^{2}$ fulfilling the identities
\begin{eqnarray}
\begin{cases}
&x_3= x_2x_1\;, \\\\
&M_3 = M_2 + x_2 M_1 + \tfrac{|x_2-1| + x_2 |x_1-1| - |x_2x_1-1|}{2}\;,
 \label{defNnew} 
\end{cases} 
\end{eqnarray} 
which we express in terms of attenuators and amplifiers in Table~\ref{tab1}.

\section{A brief review on PI-GBC Capacities}\label{rev q cap}
\begin{figure*}[t!]
	\begin{tabular}{ c c }
		\includegraphics[width=0.99\columnwidth]{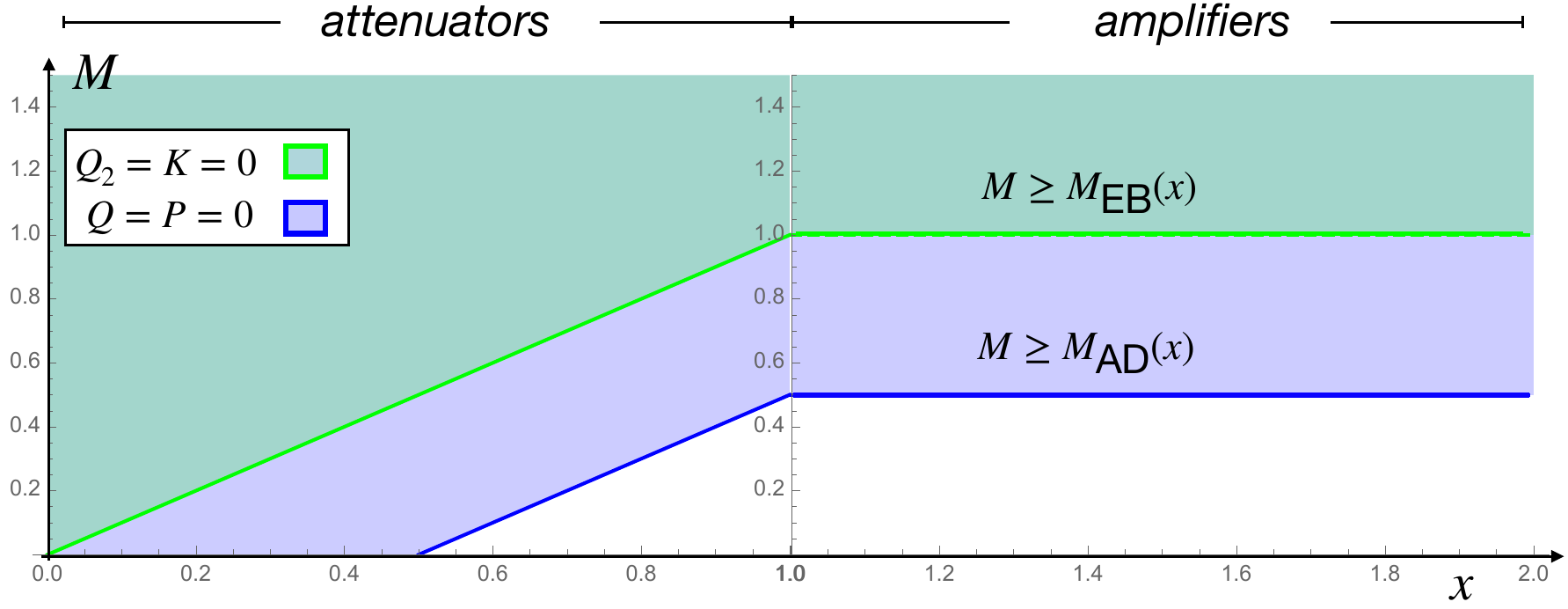} & \includegraphics[width=0.99\columnwidth]{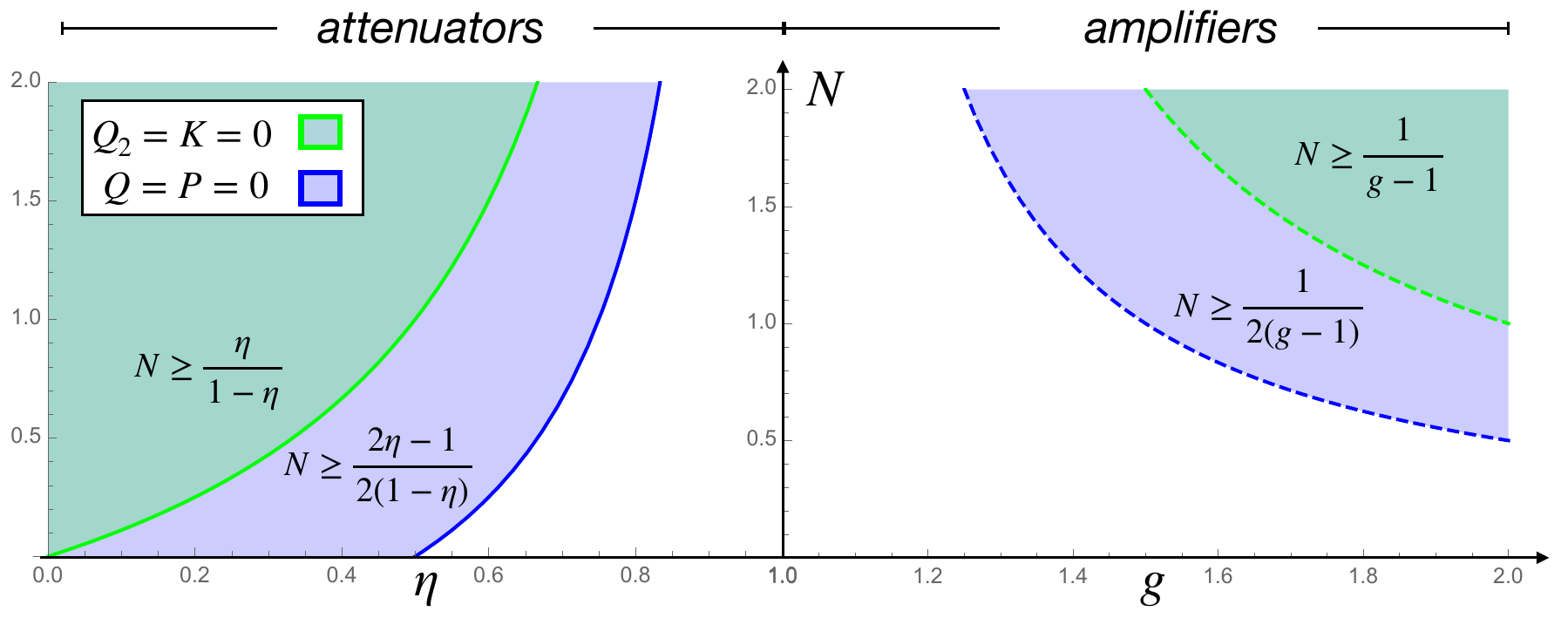}
	\end{tabular}
	\caption{Left panel: zero capacity regions for the PI-GBCs $\Phi_{x,M}$; Right panel: same plot expressed in terms of the parametrization (\ref{attamp}). 
	In both plots, the greenish areas represent the regions where the all the capacities ($Q_2$, $K$, $Q$, and $P$) are zero. The bluish areas represent the region
	where $Q$ and $P$ are zero but $Q_2$ and $K$ are not. Whether $Q$ and $P$ can  be zero  also for points in the white region is still an open problem. } 		\label{fig:zeros}
\end{figure*}

We start by recalling that 
for $N$ sufficiently large the channels 
$ {\cal E}_{\eta,N}$,  ${\cal N}_{N}$, and 
${\cal A}_{g,N}$ are Entanglement-Breaking (EB)~\cite{serafini,HOL BOOK}, specifically  
\begin{eqnarray} \left\{ 
\begin{array}{rcl}
 {\cal E}_{\eta,N}\equiv \text{EB}&  \Longleftrightarrow & \eta\in [0,1] \;, N \geq \frac{\eta}{1-\eta} \;, \\
{\cal N}_{N}\equiv \text{EB}&  \Longleftrightarrow & N \geq 1 \;, \\ 
{\cal A}_{g,N}\equiv \text{EB}&  \Longleftrightarrow &   g\geq 1\;, N \geq \frac{1}{g-1}\;.
 \end{array} \right.
\end{eqnarray}
In the  notation~(\ref{defgenerale}) this translates into the condition 
\begin{eqnarray} \label{EBregion1} 
\Phi_{x,M} \equiv \text{EB} \qquad
\Longleftrightarrow \qquad (x,M) \in \mathbb{EB} \;,
\end{eqnarray}
with the set 
\begin{eqnarray} \label{EBDD}
 \mathbb{EB}:= \{ (x,M)\in  \left(\mathbb{R}^{+} \right)^{2}: 
 M\geq M_{\text{EB}}(x)\}  \;,
\end{eqnarray}
defined by the threshold function 
    \begin{eqnarray}
 M_{\text{EB}}(x):=  \min\{ 1, x\} \;, \label{EBline}  
  \end{eqnarray}
(see Fig.~\ref{fig:zeros}).
By construction EB maps have all zero capacity values, i.e. 
\begin{eqnarray}
 (x,M) \in \mathbb{EB} \quad  
 \Longrightarrow \quad 
{\cal K}(\Phi_{x,M}) =0 \;, \label{zerocond} 
\end{eqnarray} 
with the implication that can be reversed for the two-way capacity  and for the secret-key capacity~\cite{MELE}, meaning  that $\mathbb{EB}$ 
corresponds to  the
 largest parameter region for which $Q_2$ and $K$ are null. The case of $Q$ and $P$ is different as it is known
 that these capacities nullify also for channels which do not belong to $\mathbb{EB}$.
 In particular one has  
 that
 \begin{equation} 
		\begin{cases} Q( \mathcal{E}_{\eta,N}) =P( \mathcal{E}_{\eta,N})  =0\;,  
		\quad \eta\in [0,1] \;, N \geq \tfrac{2\eta-1}{2(1-\eta)} 
		 \;, \label{thefirst} \\\\
		Q({\cal A}_{g,N}) =P({\cal A}_{g,N})  =0 \;,  
		\quad  g\geq 1 \;, N \geq \tfrac{1}{2(g-1)} \;, 
		\end{cases} 
			\end{equation}
			or 
	\begin{eqnarray} 
(x,M) \in \mathbb{AD} \quad  
 \Longrightarrow \quad 
{Q}(\Phi_{x,M}) = P(\Phi_{x,M})=0 \;, \label{zerocond00} \end{eqnarray}
with \begin{eqnarray} \label{EBDDD}
 \mathbb{AD}&:=& \{ (x,M)\in  \left(\mathbb{R}^{+} \right)^{2}: 
  M\geq M_{\text{AD}}(x)\}  \;, \end{eqnarray}
  \begin{eqnarray} \label{anti-deg conditions}
  M_{\text{AD}}(x)&:=&  \min\{ x-1/2,1/2\}\;, 
\end{eqnarray}		
corresponding to the Anti-Degradability (AD)
 region for the single-model PI-GBCs~\cite{CARUSO061,CARUSO06,HOLEVO07,LAMI19}
 (see Fig.~\ref{fig:zeros}). Notice 
 that at present it is still not clear whether or not ${\mathbb{AD}}$ is the largest set 
			where $Q$ and/or $P$ nullify.   
			What it is known are the exact values of these quantities  for   the special cases where $M=0$. Specifically in the case of the quantum and private capacities we have~\cite{WOLF}  			\begin{eqnarray}\label{wolf} 
			P(\Phi_{x,0}) = Q(\Phi_{x,0}) =\max\{0,\log_2\tfrac{x}{|1-x|}\}\;,
			\end{eqnarray} 
			while for the secret-key and two-way capacities it holds~\cite{Pirandola upp bound} 
			\begin{equation}\label{plob1} 
			K(\Phi_{x,0}) =Q_2(\Phi_{x,0})  = 
			\log_2(\tfrac{\max\{1,x\}}{|1-x|}) \;,
				\end{equation} 
				(notice that for amplifiers, i.e. $x\geq 1$, Eq.~(\ref{wolf}) and (\ref{plob1}) coincide). 

\begin{figure*}
     {{\includegraphics[width=0.99\columnwidth]{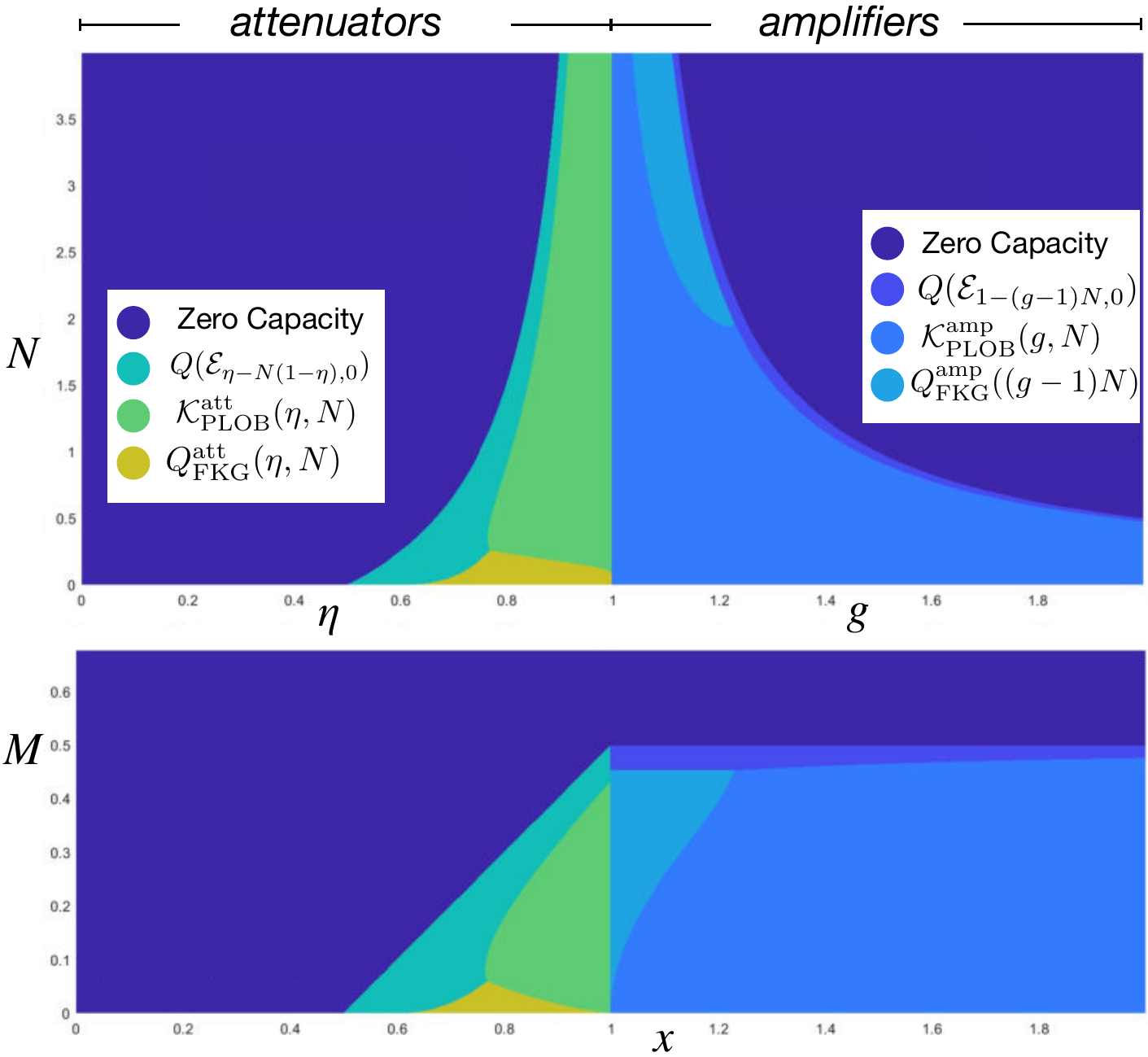} }}
     {{\includegraphics[width=0.99\columnwidth]{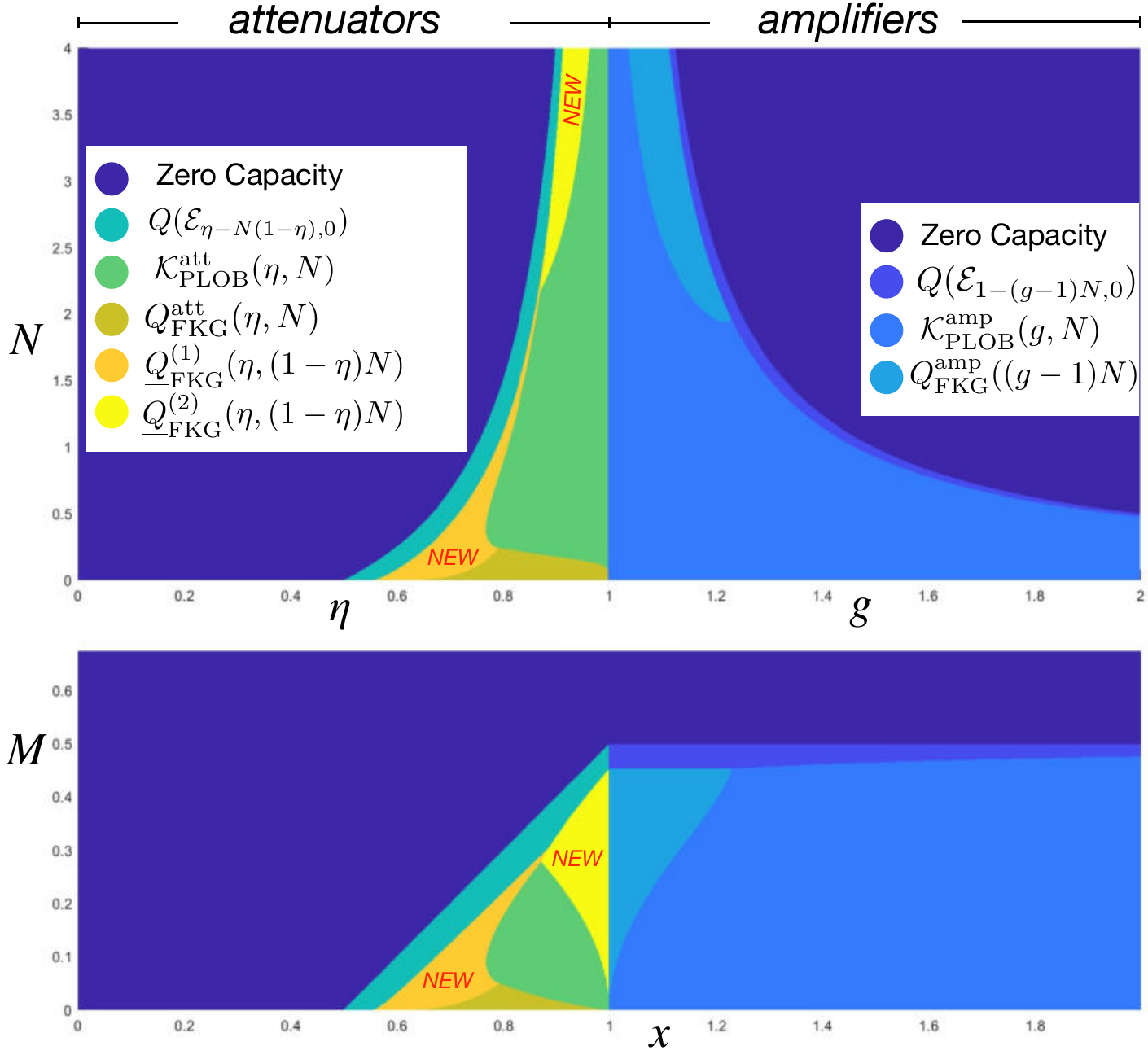} }}
    \caption{{\bf Left:} The top panel shows a numerical comparison between  the upper bounds (\ref{UPPLOB1}), (\ref{twist}), and (\ref{FKGup}) 
 for the quantum capacity  $Q$ of the channels ${\cal E}_{\eta, N}$ and ${\cal A}_{g,N}$. Each region with different colour indicates which 
    result is the best upper bound.  Purple region is where the quantum capacity is zero according to Eq.~(\ref{anti-deg conditions}).
  The bottom panel presents the same comparison expressed 
in terms of the $x,M$ parametrization~(\ref{defgenerale}). {\bf Right:} updated version of previous figure which includes the improved bounds  $\underline{Q}^{(1)}_\mathrm{FKG}(x,M)$ and $\underline{Q}^{(2)}_\mathrm{FKG}(x,M)$ of Eq.~(\ref{bounds with opt}): the orange and yellow regions (marked with the script {\it NEW})  are where they
    provide better constraints than the upper bounds of Sec.~\ref{sec:up} (notice that no improvement is obtained for amplifiers). The top panel reports the result in terms of the $\eta,N$ and $g,N$ parametrization, while the bottom panel reports the same result in the $x,M$ parametrization.} 
\label{fig:regions}\end{figure*}

\subsection{Upper bounds} \label{sec:up} 
State-of-the-art upper bounds for the capacities of thermal attenuators and amplifiers are given in Refs.~\cite{Pirandola upp bound,naj,matteo,flagged channel 3}. 
Specifically in \cite{Pirandola upp bound} it has been showed that, outside the EB region~(\ref{EBregion1}),
  all the quantum capacities ${\cal K}$ of thermal attenuators and
 amplifiers can be bounded as follows
\begin{equation}
\begin{cases} \label{UPPLOB1}
\begin{aligned}
{\cal K}(\mathcal{E}_{\eta,N})\leq {\cal K}^{\mathrm{att}}_{\mathrm{PLOB}}(\eta,N)&:= 	-h(N)-\log_2({\scriptstyle{(1-\eta)\eta^{N}}})\;, \\
\end{aligned}
\\ \\
\begin{aligned}
{\cal K}(\mathcal{A}_{g,N})\leq {\cal K}^{\mathrm{amp}}_{\mathrm{PLOB}}(g,N)&:=
	-h(N)+ \log_2(\tfrac{g^{N+1}}{g-1}) \;,\\ 
\end{aligned}
\end{cases}
\end{equation}
with 
\begin{eqnarray} h(x):=({x+1})\log_2({x+1})-x\log_2(x)\;,\end{eqnarray} 
(see also Ref.~\cite{btw} for a strong-converse extension of these inequalities).
Partial improvements w.r.t. to the above inequalities for the case of quantum and private capacities, have been reported in Refs.~\cite{matteo,naj} where special instances of the decomposition rules~$\mathbf{(C_{3.1})}$ and $\mathbf{(C_{3.2})}$ were employed to observe that outside the AD region~(\ref{thefirst}) (i.e. for $N \leq \tfrac{2\eta-1}{2(1-\eta)}$ for ${\cal E}_{\eta, N}$ and $N\leq \frac{1}{2 (g-1)}$ for
${\cal A}_{g,N}$)
one can write $\mathcal{E}_{\eta,N}=\mathcal{E}_{\eta-N(1-\eta),0}\circ{\cal A}_{\frac{\eta}{\eta-N(1-\eta)},0}$ 
and ${\cal A}_{g,N}=  \mathcal{E}_{1-(g-1)N ,0}\circ{\cal A}_{\frac{g}{1-(g-1)N },0}$,
which ultimately leads to 
\begin{equation}
\begin{cases}\label{twist}
\begin{aligned}
Q(\mathcal{E}_{\eta,N}),P(\mathcal{E}_{\eta,N})\leq Q(\mathcal{E}_{\eta-N(1-\eta),0})&=\log_2(\tfrac{\eta-N(1-\eta)}{1-\eta + N(1-\eta)}) \\
\end{aligned}
\\ \\ 
\begin{aligned}
Q({\cal A}_{g,N}),P(\mathcal{A}_{g,N})\leq  Q(\mathcal{E}_{1-(g-1)N,0})&=\log_2(\tfrac{1-(g-1)N}{(g-1)N})\; \\
\end{aligned}
\end{cases}
\end{equation}
(notice that the bounds nullifies at the border with the AD region).
In Ref.~\cite{flagged channel 3} instead, using degradable extensions of thermal attenuators, it was proven that 
\begin{eqnarray}
\label{FKGup} \begin{cases} 
Q(\mathcal{E}_{\eta,N}) ,P(\mathcal{E}_{\eta,N})\leq Q^{\mathrm{att}}_\mathrm{FKG}(\eta,N)\;,\\ \\
Q(\mathcal{A}_{g,N}), P(\mathcal{A}_{g,N})\leq Q^{\mathrm{amp}}_\mathrm{FKG}((g-1)N)\;,
  \end{cases} 
\end{eqnarray}
with 
\begin{eqnarray} \label{DEFFKBatt} 
Q^{\mathrm{att}}_\mathrm{FKG}(\eta,N) &:=&
  \log_2(\tfrac{\eta}{1-\eta})+h((1-\eta)N) -h(\eta N)\;,\nonumber\\ \nonumber
 Q^{\mathrm{amp}}_{\rm FKG}(M)&:=&-\log_2(e M)+2h \left(\tfrac{\sqrt{ M^2+1}-1}{2}\right)\;.
\end{eqnarray}
Notice that the first is a function which, for fixed $\eta$ is monotonically decreasing w.r.t. $N$ and, for fixed $N$, monotonically increasing in $\eta$ (being
null for $\eta \leq 0.5$). On the contrary, for $M=(g-1)N\in[0,1/2]$ (i.e. the only region where, in view of Eq.~(\ref{thefirst}) 
 it makes sense to 
consider~(\ref{FKGup})) $Q^{\mathrm{amp}}_{\rm FKG}(M)$  is a positive, monotonically increasing function.
As explicitly shown in the left part of  Fig.~\ref{fig:regions}, one may notice 
 that while for low values of $\eta$ the upper bound~(\ref{twist}) outperform the others, as  $\eta$ approaches $1$, (\ref{UPPLOB1}) and (\ref{FKGup}) 
 win (in particular $Q^{\mathrm{att}}_\mathrm{FKG}$ provides
 the best bound in the low noise regime $N\ll 1$, while $Q^{\mathrm{att}}_\mathrm{PLOB}$ does it for higher $N$). 
 
 \subsection{Lower bounds} 
 Lower bounds for the two-way and secret capacities are provided by the inequalities~\cite{PIR09,HOLEVO01}  
 \begin{equation} \label{lower1}
\begin{cases}
K(\mathcal{E}_{\eta,N})	\geq {Q_2}(\mathcal{E}_{\eta,N})\geq\max\{ 0, 	-h(N)-\log_2(1-\eta)\}\;, \\\\
	K(\mathcal{A}_{g,N})	\geq {Q_2}(\mathcal{A}_{g,N})\geq \max\{ 0,
	-h(N)+ \log_2(\tfrac{g}{g-1})\}\, , 
	\end{cases} 
\end{equation}
{
 which have been improved in~\cite{PIR16,MELE}. \cite{PIR16} improved the lower bound for secret capacity adopting Gaussian protocols based on suitable trusted-noise detectors, and \cite{MELE} showed that the region where $Q_2$ and $K$ are non-zero
 extend beyond what predicted by the above equations, including all the non-EB region.} 
A lower bound on $Q$ and  $P$ is given instead by the coherent information for one use of the channel, evaluated on an
infinite temperature state
\begin{equation}\label{lower2} 
	Q(\mathcal{E}_{\eta,N})\geq Q^{\mathrm{att}}_\mathrm{low}(\eta,N)=\mathrm{max}\{0,\log_2(\tfrac{\eta}{1-\eta})-h(N)\}\, .
\end{equation}

\section{Low ground/high ground capacity regions analysis}\label{l and h reg}
Using the composition rules~(\ref{defNnew0}), (\ref{defNnew})  together with the data-processing inequality~(\ref{data}),
  in the parameter space of the maps~(\ref{defgenerale})  we can identify regions  where the capacities are
	provably smaller or larger than an assigned reference value. 
	For this purpose given $(x,M)\in \left(\mathbb{R}^{+} \right)^{2}$  we define ${\mathbb{L}}_{x,M}:={\mathbb{L}}(\Phi_{x,M})$ the collection of points  $(x',M')\in \left(\mathbb{R}^{+} \right)^{2}$ such that
	such that the channel
 $\Phi_{x',M'}$ can be decomposed as a three-elements concatenation $ \Phi_{x',M'} =\Phi_{\bar{x}_1,\bar{M}_1} \circ  \Phi_{x,M} \circ \Phi_{\bar{x}_2,\bar{M}_2}$ that involves   
$\Phi_{x,M}$ together with two other maps $\Phi_{\bar{x}_1,\bar{M}_1}$ and $\Phi_{\bar{x}_2,\bar{M}_2}$, i.e.~\cite{NOTA1} 
\begin{eqnarray}
&&\!\!\!\!\!{\mathbb{L}}_{x,M} := \Big\{ (x',M')  \in \left(\mathbb{R}^{+} \right)^{2} :\nonumber 
\exists (\bar{x}_1,\bar{M}_1),  (\bar{x}_2,\bar{M}_2) \in \left(\mathbb{R}^{+} \right)^{2}  
 \\  \label{alternativedefLO} 
&& \qquad \qquad \Phi_{x',M'} =\Phi_{\bar{x}_1,\bar{M}_1} \circ  \Phi_{x,M} \circ \Phi_{\bar{x}_2,\bar{M}_2} 
 \Big\} \;. \end{eqnarray}
From~(\ref{data}) 
 it turns out that 
\begin{eqnarray} \label{prop1} 
{\cal K}(\Phi_{x',M'}) \leq {\cal K}(\Phi_{x,M})\;, \qquad \forall (x',M') \in {\mathbb{L}}_{x,M}\;.
\end{eqnarray} 
so we dub ${\mathbb{L}}_{x,M}$ the {\it low-ground 
	capacity region} of the channel $\Phi_{x,M}$.  Notice in particular that, since 
	$\Phi_{0,M'} \circ \Phi_{x,M} = \Phi_{0,M'}$ 
	 for all $(x,M)$ and $M'\geq 0$,  one has that all the points
	$(0,M')$ are included into   ${\mathbb{L}}_{x,M}$, i.e.
	\begin{eqnarray} \label{trivial} 
	(0,M') \in {\mathbb{L}}_{x,M} \;, \quad \forall M'\geq 0\;, \forall (x,M) \in  \left(\mathbb{R}^{+} \right)^{2}\;.
	\end{eqnarray} 
By reversing the ordering of the concatenations in Eq.~(\ref{alternativedefLO}) 
 we also introduce the {\it high-ground 
	capacity region} ${\mathbb{H}}_{x,M}:={\mathbb{H}}(\Phi_{x,M})$  of the channel $\Phi_{x,M}$, i.e. 
	\begin{eqnarray}
&&\!\!\!\!\!{\mathbb{H}}_{x,M} := \Big\{ (x',M')  \in \left(\mathbb{R}^{+} \right)^{2} :\nonumber 
\exists (\bar{x}_1,\bar{M}_1),  (\bar{x}_2,\bar{M}_2) \in \left(\mathbb{R}^{+} \right)^{2}  
 \\  \label{alternativedefHIGH} 
&& \qquad \qquad \Phi_{x,M} =\Phi_{\bar{x}_1,\bar{M}_1} \circ  \Phi_{x',M'} \circ \Phi_{\bar{x}_2,\bar{M}_2} 
 \Big\} \;, \end{eqnarray}
 which  fulfils the condition 
 \begin{eqnarray} \label{prop2} 
{\cal K}(\Phi_{x',M'}) \geq {\cal K}(\Phi_{x,M})\;, \qquad \forall (x',M') \in {\mathbb{H}}_{x,M}\;.
\end{eqnarray} 
Notice that by construction ${\mathbb{L}}_{x,M}$ and ${\mathbb{H}}_{x,M}$ obey a natural ordering 
\begin{eqnarray}\label{Hide11} 
 {\mathbb{L}}_{x',M'} &\subseteq&  {\mathbb{L}}_{x,M} \qquad \forall (x',M') \in {\mathbb{L}}_{x,M} \;, \\  \label{Hide22} 
 {\mathbb{H}}_{x',M'} &\subseteq&  {\mathbb{H}}_{x,M} \qquad \forall (x',M') \in {\mathbb{H}}_{x,M} \;,
\end{eqnarray}  
and fulfil the complementary relation
\begin{eqnarray} \label{complementary} 
(x',M') \in {\mathbb{H}}_{x,M} \quad \Longleftrightarrow  \quad (x,M) \in {\mathbb{L}}_{x',M'} \;.
\end{eqnarray}

  In the next subsections we shall 
  provide an  analytic characterization of  ${\mathbb{L}}_{x,M}$ and 
  ${\mathbb{H}}_{x,M}$. As we shall see one can identify two different regimes ruled by the function $M_{\text{EB}}(x)$ of 
  Eq.~(\ref{EBline}) which  identifies the EB sector.
Indeed for points 
  $(x,M)$ with 
  \begin{eqnarray}
  M \leq  M_{\text{EB}}(x)=  \min\{ 1, x\} \;, \label{notEB}  
  \end{eqnarray} 
  that is for channels which are non-EB and for the EB ones which are on the border line of the region $\mathbb{EB}$, 
the sets ${\mathbb{L}}_{x,M}$ and ${\mathbb{H}}_{x,M}$
 are defined by the 
 functions 
  \begin{eqnarray}
  f^{(1)}_{x,M}(x') &: =& M  + (1-x) \Theta(1-x)
+ (x'-1) \Theta(1-x')  \;,\nonumber  \\ 
   f^{(2)}_{x,M}(x') &: =& (x'/x) \big[ M
  +(x-1) \Theta(x-1)\big]    \nonumber \\ &&  \label{effe2}
  \qquad\qquad - (x'-1)\Theta(x'-1)  \;,
  \end{eqnarray} 
  with $\Theta(x)$ being the Heaviside step-function.
  Specifically we shall prove that under the condition~(\ref{notEB}),
   ${\mathbb{L}}_{x,M}$ is formed by all points  which are above $ f^{(1)}_{x,M}(x')$ and $ f^{(2)}_{x,M}(x')$ , i.e. 
    \begin{eqnarray} \label{analytical1} 
  {\mathbb{L}}_{x,M} &=&  
  \Big\{ (x',M')\in\left(\mathbb{R}^{+} \right)^{2} :  
  \\ \nonumber && \quad  M'\geq \max\{  f^{(1)}_{x,M}(x'),  f^{(2)}_{x,M}(x')\} \Big\} \;,
  \end{eqnarray} 
  while ${\mathbb{H}}_{x,M}$ is given by the polytope formed   by the points below such curves, i.e. 
     \begin{eqnarray} \label{analytical2} 
  {\mathbb{H}}_{x,M} &=&  
  \Big\{ (x',M')\in\left(\mathbb{R}^{+} \right)^{2} :  
  \\ \nonumber && \quad  M'\leq \min\{  f^{(1)}_{x,M}(x'),  f^{(2)}_{x,M}(x')\} \Big\} \;.
  \end{eqnarray}
As evident from Fig.~\ref{figurnew3bis},  the domains identified by Eqs.~(\ref{analytical1}) and (\ref{analytical2}) 
 admit  $(x,M)$ as unique contact point, implying that for a relative large portion  of the phase space 
 $\left(\mathbb{R}^{+} \right)^{2}$ we cannot assign a definite ordering w.r.t.  ${\cal K}(\Phi_{x,M})$ (white regions of plots). 
 The situation change however when  the map $\Phi_{x,M}$ is deep inside  the EB region, i.e. for    \begin{eqnarray}
  M >  M_{\text{EB}}(x)=  \min\{ 1, x\} \;. \label{EBconstraint}  
  \end{eqnarray} 
Under the constraint~(\ref{EBconstraint})  the sets ${\mathbb{H}}_{x,M}$  and ${\mathbb{L}}_{x,M}$
 provide a complete covering of the phase space and  have a non trivial overlap
 ${\mathbb{O}}_{x,M} := {\mathbb{H}}_{x,M} \bigcap {\mathbb{L}}_{x,M}$.  Indeed,
  one can show that
  irrespectively from the specific choice of $(x,M)$, 
  ${\mathbb{H}}_{x,M}$ coincides with the entire space $\left(\mathbb{R}^{+} \right)^{2}$, i.e.
   \begin{equation} \label{analytical2new11} 
{\mathbb{H}}_{x,M} =  \left(\mathbb{R}^{+} \right)^{2}\;,
  \end{equation}
  while  ${\mathbb{L}}_{x,M}$ (and hence ${\mathbb{O}}_{x,M}$) corresponds to the $x\rightarrow 0, M\rightarrow 0$,
  limit of Eq.~(\ref{analytical1}), i.e.~\cite{nota0}       \begin{eqnarray} \label{analytical2new} 
  {\mathbb{L}}_{x,M} &=&   {\mathbb{O}}_{x,M}  \\\nonumber
  &=& {\mathbb{L}}_{0,0} :=
  \Big\{ (x',M')\in\left(\mathbb{R}^{+} \right)^{2} :  M' > M_{\text{EB}}(x')\Big\} \;,
  \end{eqnarray}
which coincides with the subset identified by Eq.~(\ref{EBconstraint}). 
This implies  that given any two points $(x_1,M_1)$ and $(x_2,M_2)$ in   
 ${\mathbb{L}}_{0,0}$, 
their associated channel  are 
 equivalent up to concatenation with extra GBCs~(\ref{defgenerale}), i.e.  there exist 
proper choices of the maps $\Phi_{\bar{x}_1,\bar{M}_1}$, $\Phi_{\bar{x}_2,\bar{M}_2}$, $\Phi_{\bar{x}_3,\bar{M}_3}$,  and $\Phi_{\bar{x}_4,\bar{M}_4}$, such that we can write 
\begin{equation}\begin{cases}
 \Phi_{x_1,M_1} =\Phi_{\bar{x}_1,\bar{M}_1} \circ  \Phi_{x_2,M_2} \circ \Phi_{\bar{x}_2,\bar{M}_2} \;,\\
 \Phi_{x_2,M_2} =\Phi_{\bar{x}_3,\bar{M}_3} \circ  \Phi_{x_1,M_1} \circ \Phi_{\bar{x}_4,\bar{M}_4} \;,
 \end{cases} \label{equivalent} 
 \end{equation} 
 which in turn imposes  ${\cal K}(\Phi_{x_1,M_1}) ={\cal K}(\Phi_{x_2,M_2})$ in agreement with the property~(\ref{zerocond}). 

\begin{figure}[t!]
		\includegraphics[width=1\columnwidth]{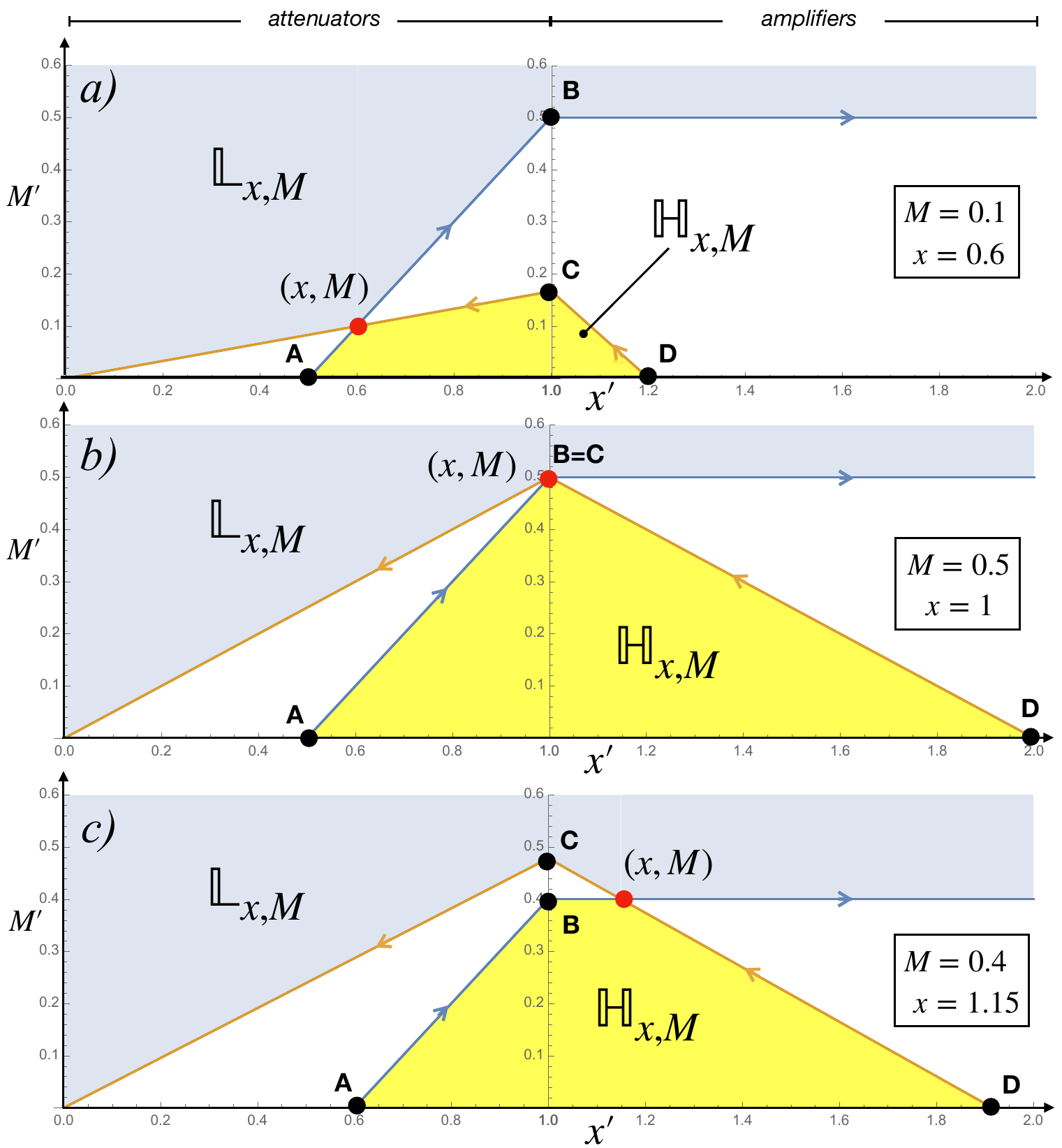}\\
		\includegraphics[width=1\columnwidth]{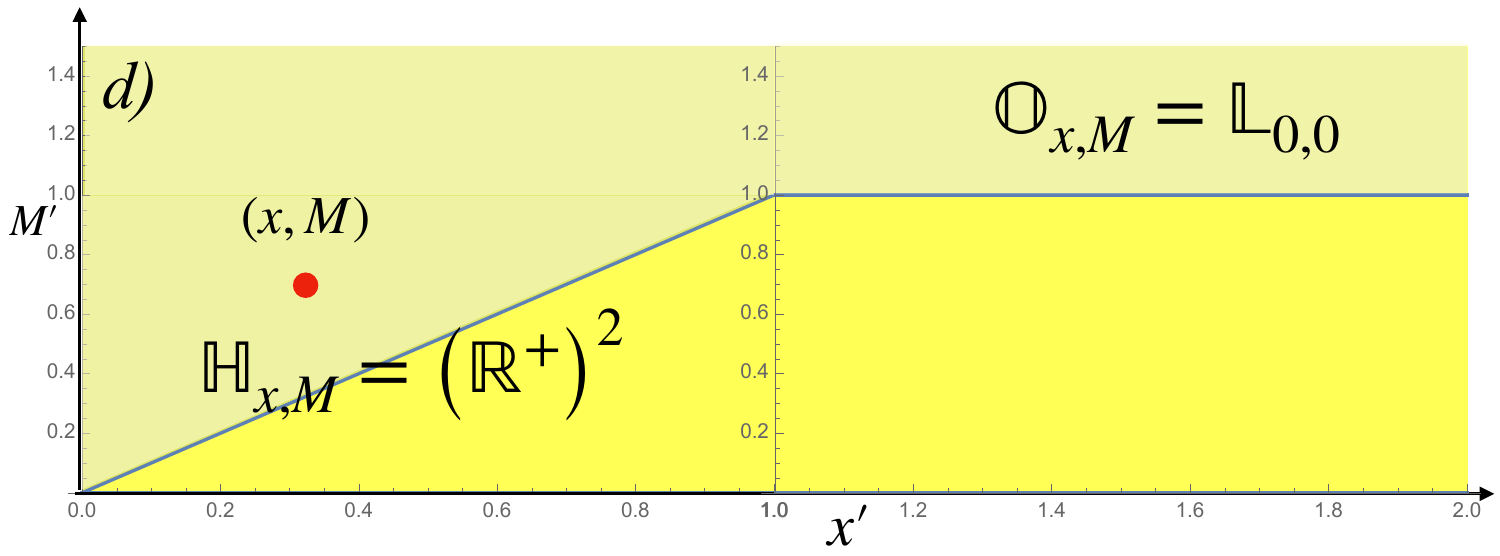}
	\caption{Low-ground  capacity region ${\mathbb{L}}_{x,M}$~(\ref{alternativedefLO})  (light blue area)  and high-ground capacity region
	${\mathbb{H}}_{x,M}$~(\ref{alternativedefHIGH}) (yellow)  for the channel $\Phi_{x,M}$ (red dot element).
 The first three top panels   refer to the regime~(\ref{notEB})
 where ${\mathbb{L}}_{x,M}$ and ${\mathbb{H}}_{x,M}$ are expressed respectively by Eqs.~(\ref{analytical1}) and~(\ref{analytical2}): specifically in panel a)  the reference map is an 
attenuator ($x=0.6$, $M=0.1$), in panel b) it is an additive classical noise
map ($x=1$, $M=0.5$), and in panel c)  it is an amplifier ($x=1.15$, $M=0.1$). Panel d) instead presents  a case where $\Phi_{x,M}$ fulfils the EB condition~(\ref{EBconstraint}); here ${\mathbb{H}}_{x,M}$ coincides with the full plane   $\left(\mathbb{R}^{+} \right)^{2}$, while ${\mathbb{L}}_{x,M}$ (and hence the overlap ${\mathbb{O}}_{x,M}={\mathbb{H}}_{x,M}\bigcap {\mathbb{L}}_{x,M}$),
are given by the set ${\mathbb{L}}_{0,0}$ of Eq.~(\ref{analytical2new}): by construction any two points
in this area are equivalent under GBC concatenation, see Eq.~(\ref{equivalent}). 
	The border lines which identify the various  regions are the functions $f^{(1)}_{x,M}(x')$ (blue curve) and $f^{(2)}_{x,M}(x')$ (orange curve) defined in Eq.~(\ref{effe2}) -- for panel d) 
	the border is given by $f^{(1)}_{0,0}(x')=\min\{1,x'\}$~\cite{nota0}.
The arrows in the plots show the direction where
	 the capacities have to decrease (or at most remain constant) and the white regions describe portions of the
	 parameter space where  the composition rules cannot be used to determine a specific capacity ordering with respect to the reference channel. 
	 In panel a) we have ${\bf A}=(x-M,0)$, ${\bf B}=(1,M-x+1)$, ${\bf C}=(1,M/x)$, ${\bf D}=(x/(x-M),0)$; in panel b) and c) instead 
	 ${\bf A}=(1-M,0)$, ${\bf B}=(1,M)$, ${\bf C}=(1,(M+x-1)/x)$, ${\bf D}=(x/(1-M),0)$ (notice that for b), ${\bf B}={\bf C}=(x,M)$). 
	 } 		\label{figurnew3bis}
\end{figure}
\section{Analytical characterization of ${\mathbb{L}}_{x,M}$ and ${\mathbb{H}}_{x,M}$ } 
In this section we give an analytical characterization of the low-ground and high-ground regions 
for an arbitrary channel $\Phi_{x,M}$. 
We start in Sec.~\ref{two-times} by focusing on a simplified version of the 
 concatenation rules entering in the definitions~(\ref{alternativedefLO}) and (\ref{alternativedefHIGH}) where  $\Phi_{x,M}$  is connected with the elements 
 $\Phi_{x',M'}$ via only a single extra PI-GBC element $\Phi_{\bar{x},\bar{M}}$. This will allows us to identify  two regions  
  \begin{eqnarray}\label{alternativedefLOsimple}
&&\!\!\!\!\!{\mathbb{L}}^{(0)}_{x,M} := \Big\{ (x',M')  \in \left(\mathbb{R}^{+} \right)^{2} :
\exists (\bar{x},\bar{M})\in \left(\mathbb{R}^{+} \right)^{2}  
 \\  
&& \Phi_{x',M'} =\Phi_{\bar{x},\bar{M}} \circ  \Phi_{x,M} \; \mbox{or} 
\; \  \Phi_{x',M'} =  \Phi_{x,M} \circ \Phi_{\bar{x},\bar{M}} 
 \Big\} \;,  \nonumber  \end{eqnarray}
 and 
 	\begin{eqnarray}  \label{alternativedefHIGHsimple}
&&\!\!\!\!\!{\mathbb{H}}^{(0)}_{x,M} := \Big\{ (x',M')  \in \left(\mathbb{R}^{+} \right)^{2} :
\exists (\bar{x},\bar{M}) \in \left(\mathbb{R}^{+} \right)^{2}  
 \\ 
 && \Phi_{x,M} =\Phi_{\bar{x},\bar{M}} \circ  \Phi_{x',M'} \; \mbox{or} 
\; \  \Phi_{x,M} =  \Phi_{x',M'} \circ \Phi_{\bar{x},\bar{M}} 
 \Big\} \;. \nonumber \end{eqnarray}
 which by  construction are subsets of ${\mathbb{L}}_{x,M}$ and ${\mathbb{H}}_{x,M}$, i.e.
 \begin{eqnarray} \label{impoinclusions} 
{\mathbb{L}}^{(0)}_{x,M} \subseteq {\mathbb{L}}_{x,M}\;, \qquad   {\mathbb{H}}^{(0)}_{x,M} \subseteq   {\mathbb{H}}_{x,M}\;. \end{eqnarray}

In Sec.~\ref{three-times}  we shall prove that for $(x,M)$ fulfilling the constraint~(\ref{notEB}),  ${\mathbb{L}}^{(0)}_{x,M}$ and ${\mathbb{H}}^{(0)}_{x,M}$
 indeed coincide with  ${\mathbb{L}}_{x,M}$ and ${\mathbb{H}}_{x,M}$ leading to Eqs.~(\ref{analytical1}) and (\ref{analytical2}). 
 The derivation of Eqs.~(\ref{analytical2new11}) and (\ref{analytical2new}) for maps 
 not fulfilling~(\ref{notEB}) will instead be given in Sec.~\ref{three-timesEB}.

\subsection{Two-elements concatenations}~\label{two-times} 
To determine ${\mathbb{L}}^{(0)}_{x,M}$ and ${\mathbb{H}}^{(0)}_{x,M}$ 
 we  adopt the parametrization~(\ref{attamp}) 
to better underline the role played by amplifiers and attenuators. 
Given  hence $\eta\in [0,1]$ and $N\geq 0$, 
let us introduce the following functions 
\begin{eqnarray} \label{defnattnamp}
	{N}_{\eta,N}^{{(1)}}(\eta')&:=&N \left(\tfrac{1-\eta}{\eta}\right)\left(\tfrac{\eta' }{1-\eta'}\right)\;, 
	\\ {N}_{\eta,N}^{(2)}(\eta')&:=& (N+1)\left(\tfrac{1-\eta}{1-\eta'}\right)-1\;.
	\label{defnattnamp11}
	\end{eqnarray}
It then turns out that 
	\begin{theorem}\label{th3}
	The attenuator map  ${\cal E}_{\eta,N}$ admits
\begin{eqnarray}\label{lowgrounddef} 
\begin{cases}
 {\mathbb{L}}^{({\rm att},1)}_{\eta,N} &:=\left\{ 
  {\scriptstyle{(\eta',N') : 0 \leq \eta'\leq \eta \;,
N'\geq \max\{  {N}_{\eta,N}^{{(1)}}(\eta')}},0\} \right\}, \\
 {\mathbb{L}}^{({\rm att},2)}_{\eta,N}  &:= \left\{  {\scriptstyle{ (\eta',N') :   1\geq \eta'\geq \eta \;,
N'\geq  \max\{  {N}_{\eta,N}^{(2)}(\eta')}},0\} \right\},  \\ 
	 {\mathbb{L}}^{({\rm att})}_{\eta,N} &:= {\mathbb{L}}^{({\rm att},1)}_{\eta,N}  \cup  {\mathbb{L}}^{({\rm att},2)}_{\eta,N} \;, 
	\end{cases} 
	\end{eqnarray}  
	(light blue area on the left-hand-side of the top panel of Fig.~\ref{figurnew3}),
	as subset of the corresponding two-element concatenation, low-ground capacity region 
	${\mathbb{L}}^{(0)}({\cal E}_{\eta,N})$,
	and 
	\begin{eqnarray}
\begin{cases}
 {\mathbb{H}}^{({\rm att},1)}_{\eta,N} &:=\left\{  {\scriptstyle{ (\eta',N'): 1\geq \eta'\geq \eta \;,
0 \leq N'\leq  {N}_{\eta,N}^{{(1)}}(\eta')}}\right\} \;, \\
 {\mathbb{H}}^{({\rm att},2)}_{\eta,N}  &:= \left\{  {\scriptstyle{ (\eta',N'): 1\leq \eta'\leq \eta \;,
0\leq N'\leq   {N}_{\eta,N}^{(2)}(\eta')}}\right\} \;,  \\ 
	 {\mathbb{H}}^{({\rm att})}_{\eta,N}  &:= {\mathbb{H}}^{({\rm att},1)}_{\eta,N}  \cup {\mathbb{H}}^{({\rm att},2)}_{\eta,N} \;, 
	\end{cases} 
	\end{eqnarray}  
	(yellow area on the~left-hand-side of top panel of Fig.~\ref{figurnew3})
	as subset of  ${\mathbb{H}}^{(0)}({\cal E}_{\eta,N})$, i.e.
	\begin{eqnarray} {\mathbb{L}}^{({\rm att})}_{\eta,N}  \subseteq {\mathbb{L}}^{(0)}({\cal E}_{\eta,N})\;, \qquad 
	{\mathbb{H}}^{({\rm att})}_{\eta,N}  \subseteq {\mathbb{H}}^{(0)}({\cal E}_{\eta,N})\label{IMPO12}\;.\end{eqnarray}
\end{theorem}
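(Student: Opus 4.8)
The goal is to show that every point in the quadrilateral-shaped regions $\mathbb{L}^{(\mathrm{att})}_{\eta,N}$ and $\mathbb{H}^{(\mathrm{att})}_{\eta,N}$ can be realized through a \emph{single} extra PI-GBC concatenated with $\mathcal{E}_{\eta,N}$ (on either side), which by~(\ref{impoinclusions})-style reasoning places them inside the two-element sets $\mathbb{L}^{(0)}(\mathcal{E}_{\eta,N})$ and $\mathbb{H}^{(0)}(\mathcal{E}_{\eta,N})$. I would work entirely in the $(\eta,N)$ / $(g,N)$ parametrization and exploit the four composition rules of Table~\ref{tab1}. The natural split is: the "$(1)$"-branch of each region comes from pre-composing $\mathcal{E}_{\eta,N}$ with an \emph{attenuator} or post-composing with an attenuator (rule $\mathbf{C_1}$), and the "$(2)$"-branch comes from composing with an \emph{amplifier} (rules $\mathbf{C_{3.2}}$ / $\mathbf{C_{4.2}}$, which produce an attenuator output). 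So the proof is really four explicit parametrized constructions, two for $\mathbb{L}$ and two for $\mathbb{H}$.

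\textbf{The low-ground inclusions.} For $\mathbb{L}^{(\mathrm{att},1)}_{\eta,N}$: given a target $(\eta',N')$ with $\eta'\le\eta$ and $N'\ge N_{\eta,N}^{(1)}(\eta')$, I would show $\mathcal{E}_{\eta',N'}=\mathcal{E}_{\eta_2,N_2}\circ\mathcal{E}_{\eta,N}$ is solvable for a legitimate attenuator $\mathcal{E}_{\eta_2,N_2}$ by inverting the $\mathbf{C_1}$ relations: $\eta_2=\eta'/\eta\le 1$, and $(1-\eta')N' = (1-\eta_2)N_2 + (1-\eta)\eta_2 N$, which has a solution with $N_2\ge 0$ precisely when $(1-\eta')N'\ge (1-\eta)\eta_2 N = (1-\eta)(\eta'/\eta)N$, i.e. exactly the threshold $N'\ge N_{\eta,N}^{(1)}(\eta')$. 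One should also check that post-composition $\mathcal{E}_{\eta,N}\circ\mathcal{E}_{\eta_2,N_2}$ covers the remaining part of the region (or that the union of the two orderings in~(\ref{alternativedefLOsimple}) is what is needed), but the algebra is the same modulo relabeling. For $\mathbb{L}^{(\mathrm{att},2)}_{\eta,N}$, with $\eta'\ge\eta$, I would instead write $\mathcal{E}_{\eta',N'}$ as $\mathcal{E}_{\eta,N}$ post-composed by an amplifier — or use rule $\mathbf{C_{4.2}}$ with $\mathcal{E}_{\eta,N}$ on the right — solving the linear system for the bath parameter and checking nonnegativity; the constraint that emerges is $N'\ge N_{\eta,N}^{(2)}(\eta')$, matching the definition~(\ref{defnattnamp11}).

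\textbf{The high-ground inclusions.} These follow by the complementary relation~(\ref{complementary}) combined with the two-element analogue: $(\eta',N')\in\mathbb{H}^{(\mathrm{att})}_{\eta,N}$ should be equivalent to $(\eta,N)\in\mathbb{L}^{(\mathrm{att})}_{\eta',N'}$. Indeed the defining inequalities are exactly dual — $N'\le N_{\eta,N}^{(1)}(\eta')$ rearranges into $N\ge N_{\eta',N'}^{(1)}(\eta)$ once one checks that $N_{\eta,N}^{(1)}$ is, up to the obvious swap, an involution in the right sense (it is, since $N_{\eta,N}^{(1)}(\eta')=N\frac{(1-\eta)\eta'}{\eta(1-\eta')}$ is symmetric under $(\eta,N)\leftrightarrow(\eta',N')$ in the appropriate cross-multiplied form). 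So once the low-ground constructions are in hand, the high-ground ones are automatic, with the roles of the two maps $\Phi_{\bar x,\bar M}$ and $\Phi_{x,M}$ interchanged in~(\ref{alternativedefHIGHsimple}).

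\textbf{Anticipated obstacle.} The routine part is inverting the linear bath-parameter equations; the genuinely delicate part is the \emph{bookkeeping of which composition rule and which ordering} produces an \emph{attenuator} output (as opposed to an amplifier), since the target channels here are all attenuators $\mathcal{E}_{\eta',N'}$ with $\eta'\in[0,1]$. Rules $\mathbf{C_{3.1}}$ vs. $\mathbf{C_{3.2}}$ (and $\mathbf{C_{4.1}}$ vs. $\mathbf{C_{4.2}}$) are case-split by whether the composite $\eta_2 g_1$ (resp. $g_2\eta_1$) is $\le 1$ or $\ge 1$, and one must verify that within the claimed region that composite stays in $[0,1]$ so that the attenuator branch is the one that applies — otherwise the resulting map would lie outside the $(\eta',N')$ chart and the region boundary would be misidentified. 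I would therefore carry a short lemma checking $\eta_2 g_1\le 1$ (and the nonnegativity of all intermediate parameters) throughout $\mathbb{L}^{(\mathrm{att},2)}_{\eta,N}$ before quoting the composition formula. A secondary subtlety is the $\max\{\cdot,0\}$ appearing in the thresholds in~(\ref{lowgrounddef}): when $N_{\eta,N}^{(i)}(\eta')<0$ one needs $N'\ge 0$ only, which corresponds to the edge cases where the auxiliary map becomes a pure (noiseless) attenuator or amplifier, and these should be handled as limiting cases of the generic construction.
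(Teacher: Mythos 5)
Your overall strategy coincides with the paper's: invert the composition rules of Table~\ref{tab1} branch by branch, show the bath parameter of the auxiliary map can be chosen nonnegative exactly on the claimed region (with saturation at zero auxiliary noise giving the boundary curves $N^{(1,2)}_{\eta,N}$), and obtain the high-ground inclusions by swapping the roles of $(\eta,N)$ and $(\eta',N')$. The treatment of $\mathbb{L}^{(\mathrm{att},1)}_{\eta,N}$ via rule $\mathbf{C_1}$ with $\eta_2=\eta'/\eta$ is exactly the paper's argument, and your duality remark for the high-ground part is the same "invert the roles" step the paper invokes.

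There is, however, a concrete error in your construction for $\mathbb{L}^{(\mathrm{att},2)}_{\eta,N}$. You propose to realize $\mathcal{E}_{\eta',N'}$ with the amplifier acting \emph{after} $\mathcal{E}_{\eta,N}$, citing rule $\mathbf{C_{4.2}}$; but $\mathbf{C_{4.2}}$ (like $\mathbf{C_{3.2}}$) outputs an \emph{amplifier} ${\cal A}_{g_3,N_3}$, so it cannot produce $\mathcal{E}_{\eta',N'}$ with $\eta'<1$ at all — your plan's claim that $\mathbf{C_{3.2}}/\mathbf{C_{4.2}}$ "produce an attenuator output" has the sublabels backwards. Worse, even the corrected version of that ordering, $\mathbf{C_{4.1}}$ with ${\cal A}_{g_2,N_2}\circ\mathcal{E}_{\eta,N}=\mathcal{E}_{\eta',N'}$ and $N_2=0$, yields the threshold $N'\geq \tfrac{\eta'-\eta+\eta'(1-\eta)N}{\eta(1-\eta')}$, which is strictly \emph{larger} than $N^{(2)}_{\eta,N}(\eta')=\tfrac{\eta'-\eta+(1-\eta)N}{1-\eta'}$ whenever $\eta<1$ and $\eta'>\eta$; the paper points this out explicitly after the Properties. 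So the construction you describe covers only a proper subset of $\mathbb{L}^{(\mathrm{att},2)}_{\eta,N}$ and does not establish the claimed inclusion. The correct decomposition puts the amplifier \emph{before} the reference attenuator: rule $\mathbf{C_{3.1}}$ with $(\eta_2,N_2)=(\eta,N)$ gives $\mathcal{E}_{\eta',N'}=\mathcal{E}_{\eta,N}\circ{\cal A}_{g_1,N_1}$ with $\eta'=g_1\eta$, and setting $N_1=0$ saturates exactly $N'=N^{(2)}_{\eta,N}(\eta')$. Since the two orderings are genuinely inequivalent here (unlike in the purely attenuator case, where the difference only costs you tightness you do not need), this is not a cosmetic slip: your own "anticipated obstacle" paragraph correctly identifies the bookkeeping risk, but the proof as written falls into it. Once the ordering is fixed, the rest of your argument goes through as in the paper.
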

\begin{proof}
To derive the first inclusion  of Eq.~(\ref{IMPO12})
 we set $(\eta_3,N_3)= (\eta',N')$ and $(\eta_1,N_1)= (\eta,N)$ in Eq.~$\mathbf{(C_1)}$ of Tab.~\ref{tab1}  to observe that 
 for all $\eta_2\in[0,1]$ and $N_2\geq 0$, 
 \begin{eqnarray} \label{dec1} \mathcal{E}_{\eta',N'}=\mathcal{E}_{\eta_2,N_2}\circ \mathcal{E}_{\eta,N}\;, 
 \end{eqnarray} 
is also a thermal channel with parameters 
\begin{eqnarray}
\begin{cases} \eta'= \eta\eta_2 \leq \eta \;, \\
N'=\tfrac{(1-\eta)\eta_2 N + (1-\eta_2)N_2}{1-\eta\eta_2}\geq  \tfrac{(1-\eta)\eta_2 }{1-\eta\eta_2}N = {N}_{\eta,N}^{{(1)}}(\eta')\;,
\end{cases} 
\label{defN'} 
\end{eqnarray} 
that span the entire set ${\mathbb{L}}^{({\rm att},1)}_{\eta,N}$, as the inequality is saturated whenever $N_2=0$. 
Therefore in view of the
definition~(\ref{alternativedefLO}) we can conclude that 
\begin{eqnarray} \label{primaprima} 
{\mathbb{L}}^{({\rm att},1)}_{\eta,N}  \subseteq {\mathbb{L}}^{(0)}({\cal E}_{\eta,N})\;. 
\label{IMPO1}
\end{eqnarray} 
We next invoke Eq.~$\mathbf{(C_{3.1})}$ of Tab.~\ref{tab1}  to observe that 
\begin{eqnarray} \label{dec2} 
{\cal E}_{\eta',N'}=\mathcal{E}_{\eta,N}\circ{\cal A}_{g_1,N_1}\;,\end{eqnarray} 
with 
\begin{eqnarray}
\begin{cases} 
\eta'= g_1\eta\geq \eta \;, \\\\
N'=\tfrac{(1-\eta)(2 N+1) + \eta(g_1-1)(2 N_1 +1)-(1-\eta')}{2(1-\eta')}\\
\qquad\quad \geq  \tfrac{\eta'-\eta + (1-\eta)N}{1-\eta'}
={N}_{\eta,N}^{(2)}(\eta')\label{defN'mew} \;,
\end{cases} 
\end{eqnarray} 
is a thermal map too which spans the entire subset ${\mathbb{L}}^{({\rm att},2)}_{\eta,N}$, as the inequality is saturated whenever $N_1=0$. Accordingly we can claim  that 
\begin{eqnarray} \label{IMPO2}  
{\mathbb{L}}^{({\rm att},2)}_{\eta,N}   \subseteq {\mathbb{L}}^{(0)}({\cal E}_{\eta,N})\;, 
\end{eqnarray} 
which together with (\ref{IMPO1}) yields the first identity of Eq.~(\ref{IMPO12}). 
\begin{figure}[t!]
		\includegraphics[width=1.00\columnwidth]{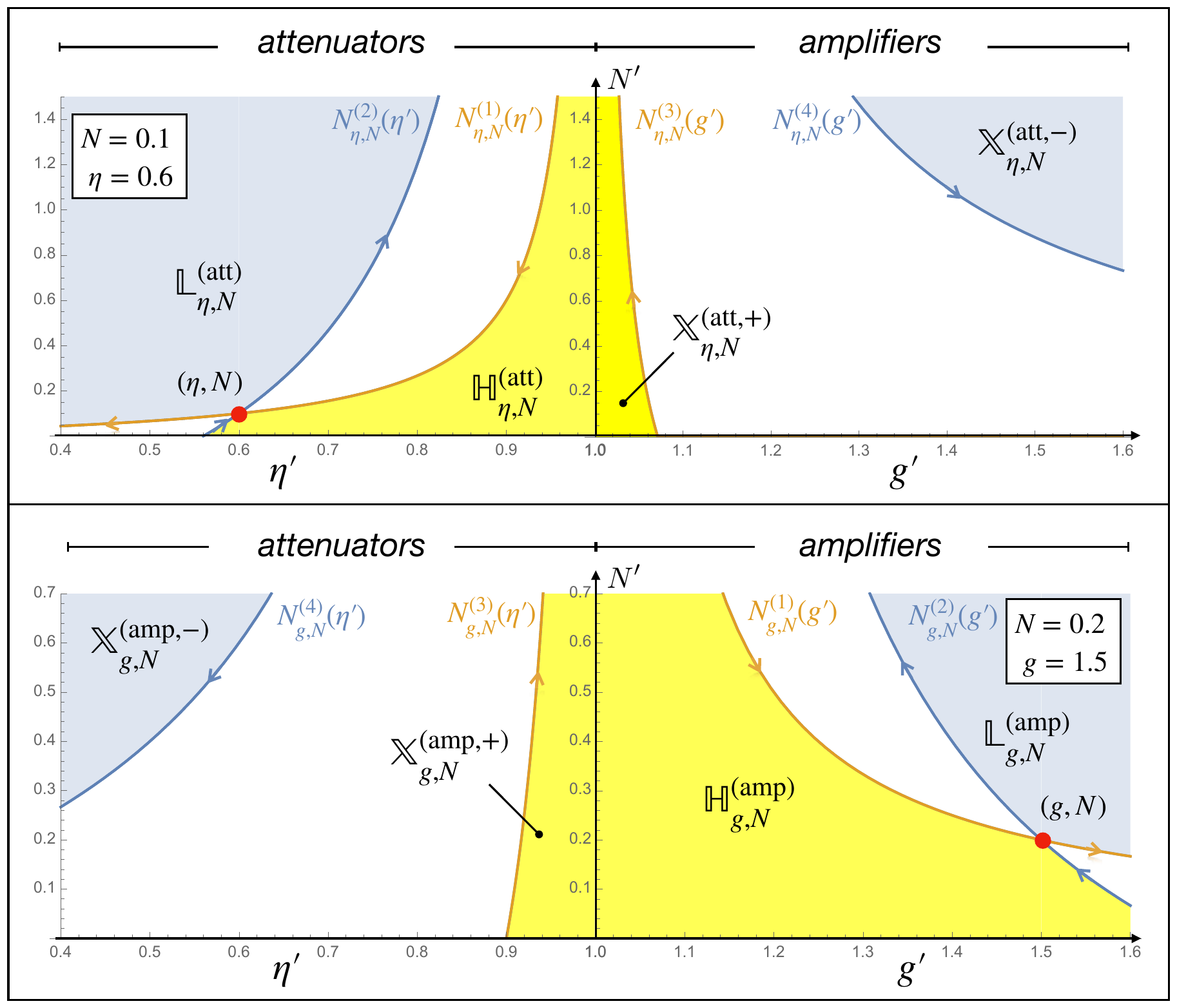} 
	\caption{Two-element concatenation analysis of the low-ground and high-ground capacity regions for thermal attenuators and 
	amplifiers (for all the examples reported in the plots the channels are non-EB, i.e. fulfil the condition~(\ref{notEB})). 
Top panel:  given a thermal attenuator ${\cal E}_{\eta, N}$ (red dot element), 
the light blue areas correspond respectively to 
${\mathbb{L}}^{({\rm att})}_{\eta,N}$ and  ${\mathbb{X}}^{({\rm att},-)}_{\eta,N}$; 
	the yellow areas describe instead  ${\mathbb{H}}^{({\rm att})}_{\eta,N}$  and ${\mathbb{X}}^{({\rm att},+)}_{\eta,N}$.
	 Plot realized using 
	$N=0.1$ and $\eta=0.6$. {Bottom panel}:  
	given a thermal amplifier ${\cal A}_{g, N}$ (red dot element), 
the light blue area   describe the sets ${\mathbb{L}}^{({\rm amp})}_{g,N}$ and ${\mathbb{X}}^{({\rm amp},-)}_{g,N}$ while 
	the yellow area  ${\mathbb{H}}^{({\rm amp})}_{g,N}$  and ${\mathbb{X}}^{({\rm amp},+)}_{g,N}$.
	Plot realized using 
	$N=0.2$ and $g=1.5$.
The orange and blue curves represent the border lines of the low ground/high ground regions defined by the  functions ${N}_{\eta,N}^{{(1,2)}}(\eta')$ of
	Eqs.~(\ref{defnattnamp}), (\ref{defnattnamp11}),  ${N}_{\eta,N}^{{(3,4)}}(g')$ of Eqs.~(\ref{defn3g}), (\ref{defn3gw}),
	${N}_{g,N}^{{(1,2)}}(g')$ of
	Eqs.~(\ref{Ndefnattnamp}), (\ref{Ndefnattnamp11}),  and ${N}_{g,N}^{{(3,4)}}(\eta')$ of Eqs.~(\ref{ddf}), (\ref{ddf1}):
	along these curves the  
	arrows show the direction where 
	 the capacities have to decrease (or at most remain constant) -- see Corollaries~\ref{monotonicity} and~\ref{monotonicitynew1}
	 of App.~\ref{sec:mono}.
	   For the points in the white regions the composition rules cannot be used to determine a specific capacity ordering with respect to the capacity of the red dot element. } 		\label{figurnew3}
\end{figure}
The derivation of the second identity of Eq.~(\ref{IMPO12}) 
 follows along the same lines by simply  inverting the roles of $(\eta',N')$ and $(\eta,N)$ in the previous passages.
 \end{proof} 

Given  next  $g\geq 1$ and $N\geq 0$ 
  and the functions 
 \begin{eqnarray} \label{Ndefnattnamp}
	{N}_{g,N}^{(1)}(g')&:=& N\left(\tfrac{g-1}{g'-1}\right)\;, \\ 
			{N}_{g,N}^{{(2)}}(g')&:=&(N+1) \left(\tfrac{g-1}{g}\right)\left(\tfrac{g'}{g'-1}\right)-1\;.
	\label{Ndefnattnamp11}
	\end{eqnarray}
 We can show that

\begin{theorem}\label{th33}
The amplifier channel ${\cal A}_{g,N}$ admits
\begin{eqnarray}
\begin{cases}\label{regioneAamp} 
 {\mathbb{L}}^{({\rm amp},1)}_{g,N} &:=\left\{  {\scriptstyle{ (g',N'): g'\geq g \;,
N'\geq \max\{  {N}_{g,N}^{{(1)}}(g')}},0\} \right\}, \\
 {\mathbb{L}}^{({\rm amp},2)}_{g,N}  &:= \left\{  {\scriptstyle{ (g',N'): g'\leq g \;,
N'\geq   \max\{ {N}_{g,N}^{(2)}(g')}},0\} \right\},  \\ 
	 {\mathbb{L}}^{({\rm amp})}_{g,N}  &:= {\mathbb{L}}^{({\rm amp},1)}_{g,N}  \cup  {\mathbb{L}}^{({\rm amp},2)}_{g,N} \;, 
	\end{cases} 
	\end{eqnarray}  
	(light blue area on the right-hand-side of the  bottom panel of Fig.~\ref{figurnew3}),
	as subset of the corresponding two-element concatenation, low-ground capacity region 
	${\mathbb{L}}^{(0)}({\cal A}_{\eta,N})$,
	and 
	\begin{eqnarray}
\begin{cases} \label{regioneBamp} 
 {\mathbb{H}}^{({\rm amp},1)}_{g,N} &:=\left\{  {\scriptstyle{ (g',N'): 1 \leq g'\leq g\;,
0\leq N'\leq  {N}_{g,N}^{{(1)}}(g')}}\right\}, \\
 {\mathbb{H}}^{({\rm amp},2)}_{g,N}  &:= \left\{  {\scriptstyle{ (g',N'): g'\geq g \;,
0\leq N'\leq   {N}_{g,N}^{(2)}(g')}}\right\},  \\ 
	 {\mathbb{H}}^{({\rm amp})}_{g,N}  &:= {\mathbb{H}}^{({\rm amp},1)}_{g,N}  \cup {\mathbb{H}}^{({\rm amp},2)}_{g,N} \;, 
	\end{cases} 
	\end{eqnarray}  
	(yellow area on the~rigth-hand-side of bottom  panel of Fig.~\ref{figurnew3}), as
subspace of  ${\mathbb{H}}^{(0)}({\cal A}_{\eta,N})$, i.e.
	\begin{eqnarray} {\mathbb{L}}^{({\rm amp})}_{g,N}   \subseteq {\mathbb{L}}^{(0)}({\cal A}_{g,N})\;, \qquad 
	{\mathbb{H}}^{({\rm amp})}_{g,N} \subseteq {\mathbb{H}}^{(0)}({\cal A}_{g,N})\label{IMPO12amp}\;.\end{eqnarray}
\end{theorem}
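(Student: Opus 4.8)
The plan is to mirror the structure of the proof of Property~\ref{th3}, now using the composition rules that involve only amplifiers and mixed concatenations of amplifiers with attenuators. To establish the first inclusion of Eq.~(\ref{IMPO12amp}), I would first use rule~$\mathbf{(C_2)}$ of Tab.~\ref{tab1} with $(g_3,N_3)=(g',N')$ and $(g_1,N_1)=(g,N)$: for arbitrary $g_2\geq 1$, $N_2\geq 0$ this yields ${\cal A}_{g',N'}={\cal A}_{g_2,N_2}\circ{\cal A}_{g,N}$ with $g'=g_2 g\geq g$ and $(g'-1)N'=(g_2-1)N_2+(g-1)g_2 N$, so that $N'=\tfrac{(g_2-1)N_2+(g-1)g_2 N}{g'-1}\geq\tfrac{(g-1)g_2 N}{g'-1}=N\tfrac{g-1}{g'-1}={N}^{(1)}_{g,N}(g')$, with saturation at $N_2=0$. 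As $g_2$ ranges over $[1,\infty)$, the pair $(g',N')$ sweeps exactly ${\mathbb{L}}^{({\rm amp},1)}_{g,N}$, hence ${\mathbb{L}}^{({\rm amp},1)}_{g,N}\subseteq{\mathbb{L}}^{(0)}({\cal A}_{g,N})$. The second piece comes from rule~$\mathbf{(C_{4.2})}$: writing ${\cal A}_{g',N'}={\cal A}_{g,N}\circ{\cal E}_{\eta_1,N_1}$ gives $g'=g\,\eta_1\leq g$ and, after solving the second line of $\mathbf{(C_{4.2})}$ for $N'$, a relation of the form $N'\geq{N}^{(2)}_{g,N}(g')$ saturated at $N_1=0$; checking that the right-hand side of the saturated identity reduces to $(N+1)\tfrac{g-1}{g}\tfrac{g'}{g'-1}-1$ as in Eq.~(\ref{Ndefnattnamp11}) is the routine algebra step. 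Letting $\eta_1$ range over $[0,1]$ sweeps ${\mathbb{L}}^{({\rm amp},2)}_{g,N}$, giving ${\mathbb{L}}^{({\rm amp},2)}_{g,N}\subseteq{\mathbb{L}}^{(0)}({\cal A}_{g,N})$, and together with the first piece this is the left identity of Eq.~(\ref{IMPO12amp}).

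For the second identity of Eq.~(\ref{IMPO12amp}), the argument is the same with the roles of $(g',N')$ and $(g,N)$ interchanged in the definitions~(\ref{alternativedefHIGHsimple}): we seek maps $\Phi_{\bar x,\bar M}$ such that ${\cal A}_{g,N}$ itself decomposes as a two-element concatenation through ${\cal A}_{g',N'}$. Using $\mathbf{(C_2)}$ with $(g_3,N_3)=(g,N)$ and $(g_1,N_1)=(g',N')$ shows that whenever $1\leq g'\leq g$ and $0\leq N'\leq{N}^{(1)}_{g,N}(g')$ one can solve for a legitimate amplifier $(g_2,N_2)$ with ${\cal A}_{g,N}={\cal A}_{g_2,N_2}\circ{\cal A}_{g',N'}$, which reproduces ${\mathbb{H}}^{({\rm amp},1)}_{g,N}$; and using $\mathbf{(C_{4.2})}$ with the output on the left being ${\cal A}_{g,N}$ and the inner amplifier being ${\cal A}_{g',N'}$ (so ${\cal A}_{g,N}={\cal A}_{g',N'}\circ{\cal E}_{\eta_1,N_1}$, whence $g=g'\eta_1$ forces $g'\geq g$) reproduces ${\mathbb{H}}^{({\rm amp},2)}_{g,N}$. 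The union gives ${\mathbb{H}}^{({\rm amp})}_{g,N}\subseteq{\mathbb{H}}^{(0)}({\cal A}_{g,N})$.

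The main obstacle, as in Property~\ref{th3}, is not conceptual but bookkeeping: one must verify that after eliminating the free parameter ($N_2$, or $N_1$, or $\eta_1$) from the quadratic-looking composition identities in Tab.~\ref{tab1}, the residual constraint is \emph{exactly} the stated inequality $N'\geq\max\{{N}^{(i)}_{g,N}(g'),0\}$ (and dually $0\leq N'\leq{N}^{(i)}_{g,N}(g')$ for the high-ground case), with equality achieved at the boundary value of the free parameter --- in particular one has to confirm that no extra positivity constraint on the auxiliary parameters further shrinks the swept region, and that the $\max\{\cdot,0\}$ truncation in Eqs.~(\ref{regioneAamp}) is precisely what the physical requirement $N'\geq 0$ produces. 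A secondary point worth a line is that the rules $\mathbf{(C_{3.2})}$ and $\mathbf{(C_{4.1})}$ could a priori also be invoked, but produce channels already contained in the regions obtained above, so they add nothing to ${\mathbb{L}}^{(0)}$, ${\mathbb{H}}^{(0)}$ beyond what $\mathbf{(C_2)}$ and $\mathbf{(C_{4.2})}$ give --- this is why the stated sets are only claimed as \emph{subsets} of ${\mathbb{L}}^{(0)}({\cal A}_{g,N})$ and ${\mathbb{H}}^{(0)}({\cal A}_{g,N})$, with the reverse inclusions deferred to Sec.~\ref{three-times}.
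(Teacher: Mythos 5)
Your overall strategy is the right one --- it is the paper's: pick the relevant composition rules from Tab.~\ref{tab1}, fix the reference amplifier in one slot, let the auxiliary channel vary, and observe that the residual noise is saturated when the auxiliary noise vanishes. However, in every one of the four sub-cases you have placed the reference channel in the \emph{wrong} slot of the concatenation, and as a consequence the regions you actually sweep are strictly smaller than ${\mathbb{L}}^{({\rm amp},i)}_{g,N}$ and ${\mathbb{H}}^{({\rm amp},i)}_{g,N}$. Concretely, for the first piece you set $(g_1,N_1)=(g,N)$ in $\mathbf{(C_2)}$, i.e.\ ${\cal A}_{g',N'}={\cal A}_{g_2,N_2}\circ{\cal A}_{g,N}$; the noise rule then gives $(g'-1)N'=(g_2-1)N_2+(g-1)g_2N$, and at $N_2=0$ the saturated value is $N'=N\tfrac{(g-1)g_2}{g'-1}=N\tfrac{(g-1)g'}{g(g'-1)}$, \emph{not} $N^{(1)}_{g,N}(g')=N\tfrac{g-1}{g'-1}$: your chain of equalities silently drops the factor $g_2=g'/g>1$. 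Since the reachable set is only $N'\geq N\tfrac{(g-1)g'}{g(g'-1)}$, the strip $N^{(1)}_{g,N}(g')\leq N'< N\tfrac{(g-1)g'}{g(g'-1)}$ of ${\mathbb{L}}^{({\rm amp},1)}_{g,N}$ is never produced. The paper instead fixes $(g_2,N_2)=(g,N)$, i.e.\ ${\cal A}_{g',N'}={\cal A}_{g,N}\circ{\cal A}_{g_1,N_1}$, for which $(g'-1)N'=(g-1)N+(g_1-1)gN_1$ saturates exactly at $N^{(1)}_{g,N}(g')$. The same problem recurs in your second piece: using $\mathbf{(C_{4.2})}$, i.e.\ ${\cal A}_{g',N'}={\cal A}_{g,N}\circ{\cal E}_{\eta_1,N_1}$, the saturated curve at $N_1=0$ works out to $N'=\tfrac{N(g-1)+g-g'}{g'-1}$, which exceeds $N^{(2)}_{g,N}(g')$ by $(N+1)\tfrac{(g-1)(g-g')}{g(g'-1)}>0$ for $g'<g$; the ``routine algebra step'' you defer would in fact fail. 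The paper uses $\mathbf{(C_{3.2})}$, i.e.\ ${\cal A}_{g',N'}={\cal E}_{\eta_2,N_2}\circ{\cal A}_{g,N}$ with the reference map innermost, which does saturate at $N^{(2)}_{g,N}(g')$.

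The high-ground half inherits the same defect through the complementarity relation: e.g.\ solving ${\cal A}_{g,N}={\cal A}_{g_2,N_2}\circ{\cal A}_{g',N'}$ for $N_2\geq 0$ requires $N'\leq N^{(1)}_{g,N}(g')\,g'/g$, which is strictly stronger than the condition defining ${\mathbb{H}}^{({\rm amp},1)}_{g,N}$, whereas ${\cal A}_{g,N}={\cal A}_{g',N'}\circ{\cal A}_{g_1,N_1}$ is solvable on all of it. Your closing remark is also exactly backwards: it is $\mathbf{(C_{3.2})}$ that is \emph{needed} for the second piece, while $\mathbf{(C_{4.2})}$ (and more generally the reversed-order decompositions) yield only redundant, provably weaker inclusions --- this is precisely the point the paper makes in the discussion following these Properties. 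The fix is purely a matter of swapping which slot the reference amplifier occupies in each rule; once that is done, your saturation-at-zero-noise argument goes through verbatim.
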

\begin{proof}
To derive the first inclusion 
  we set $(g_3,N_3)= (g',N')$ and $(g_2,N_2)= (g,N)$ in Eq.~$\mathbf{(C_{2})}$ of Tab.~\ref{tab1}  to observe 
 for all $g_1\geq 1$ and $N_1\geq 0$, 
 \begin{eqnarray} {\cal A}_{g',N'}={\cal A}_{g,N}\circ{\cal A}_{g_1,N_1}\label{dec3} \end{eqnarray}  
is also a thermal channel with parameters 
\begin{eqnarray}
\begin{cases} g'= g_1 g \geq g \;, \\\\
N'=\tfrac{(g_1-1)g N_1 + (g-1)N}{g'-1}\geq   {N}_{g,N}^{{(1)}}(g')\;,
\end{cases} 
\label{defN'} 
\end{eqnarray} 
which  span the entire area ${\mathbb{L}}^{({\rm amp},1)}_{g,N}$ , as the inequality is saturated whenever $N_1=0$, leading to
\begin{eqnarray}  {\mathbb{L}}^{({\rm amp},1)}_{g,N}   \subseteq {\mathbb{L}}^{(0)}({\cal A}_{g,N})\label{primaprimap4imq} \;.
\end{eqnarray}  
We next invoke~$\mathbf{(C_{3.2})}$ of Tab.~\ref{tab1}  to observe that 
\begin{eqnarray} {\cal A}_{g',N'}=\mathcal{E}_{\eta_2,N_2}\circ{\cal A}_{g,N}\;, \label{dec4} \end{eqnarray} 
is an amplifier too 
with parameters 
\begin{equation}
\begin{cases} 
&g'= \eta_2 g\leq g \;, \\\\
&N'=\tfrac{(1-\eta_2)(2N_2 +1) + \eta_2 (g-1)(2 N+1) - (g'-1)}{2(g'-1)} \geq 
{N}_{g,N}^{(2)}(g')\label{defN'mewnew} \;,
\end{cases} 
\end{equation} 
that cover the entire subset ${\mathbb{L}}^{({\rm amp},2)}_{g,N}$, as the inequality is saturated whenever $N_2=0$.  Hence we have 
\begin{eqnarray}  {\mathbb{L}}^{({\rm amp},1)}_{g,N}   \subseteq {\mathbb{L}}^{(0)}({\cal A}_{g,N})\label{primaprimap4imq} \;.
\end{eqnarray}  
which together with (\ref{primaprimap4imq}) gives us the thesis. 
The derivation of the second inclusion of Eq.~(\ref{IMPO12amp}) 
 follows along the same lines by simply  inverting the roles of $(g',N')$ and $(g,N)$ in the previous passages.
 \end{proof} 

It is finally possible to establish a partial ordering between  the capacities of attenuators and those of the amplifiers.
Specifically  given  $\eta\in [0,1]$ and $N\geq 0$    and the 
 functions 
 \begin{eqnarray} \label{defn3g} 
	{N}_{\eta,N}^{(3)}(g)&:=&N \left(\tfrac{1-\eta}{\eta}\right)\left(\tfrac{g }{g-1}\right)-1 \;, \\
		{N}_{\eta,N}^{(4)}(g)&:=&(N+1) \left(\tfrac{1-\eta}{g-1}\right)\;, \label{defn3gw} 
	\end{eqnarray}
it follows that
		\begin{theorem}\label{th3new}
	The attenuator map  ${\cal E}_{\eta,N}$ admits 
\begin{eqnarray} \label{defC+at} 
 {\mathbb{X}}^{({\rm att},+)}_{\eta,N} &:=&\left\{  {\scriptstyle{ (g,N'): g\geq 1, 0\leq N' \leq {N}_{\eta,N}^{(3)}(g)}} \right\},
 \\ \nonumber 
  {\mathbb{X}}^{({\rm att},-)}_{\eta,N} &:=&\left\{  {\scriptstyle{ (g,N'): g \geq 1, N' \geq \max\{ {N}_{\eta,N}^{(4)}(g)}},0\}
   \right\},
	\end{eqnarray}  
	(yellow and light blue areas on the right-hand-side of the top  panel of Fig.~\ref{figurnew3})
	as subsets of ${\mathbb{H}}^{(0)}({\cal E}_{\eta,N})$ and $ {\mathbb{L}}^{(0)}({\cal E}_{\eta,N})$ respectively, i.e. 
	\begin{eqnarray}  {\mathbb{X}}^{({\rm att},+)}_{\eta,N}  \subseteq {\mathbb{H}}^{(0)}({\cal E}_{\eta,N})\;, \qquad 
	{\mathbb{X}}^{({\rm att},-)}_{\eta,N}  \subseteq {\mathbb{L}}^{(0)}({\cal E}_{\eta,N})\label{IMPO12amp}\;.\end{eqnarray}
\end{theorem}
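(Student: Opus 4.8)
The plan is to reproduce, for these attenuator--amplifier orderings, the same mechanism used in the proofs of Properties~\ref{th3} and~\ref{th33}: for each point of ${\mathbb{X}}^{({\rm att},\pm)}_{\eta,N}$ I would display a single two-element concatenation taken from Table~\ref{tab1} that realizes the required simulation, and then read off the inclusion from the definitions~(\ref{alternativedefLOsimple}) and~(\ref{alternativedefHIGHsimple}). The two statements use the two mixed composition rules $\mathbf{(C_{3.2})}$ and $\mathbf{(C_{3.1})}$ respectively.

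For ${\mathbb{X}}^{({\rm att},-)}_{\eta,N}\subseteq{\mathbb{L}}^{(0)}(\mathcal{E}_{\eta,N})$ I would use $\mathbf{(C_{3.2})}$, i.e. $\mathcal{E}_{\eta_2,N_2}\circ\mathcal{A}_{g_1,N_1}=\mathcal{A}_{g_3,N_3}$, with the assignment $(\eta_2,N_2)=(\eta,N)$ and $(g_3,N_3)=(g,N')$. The multiplicative identity $g=\eta g_1$ forces $g_1=g/\eta\ge g\ge 1$, so $\mathcal{A}_{g_1,N_1}$ is a legitimate amplifier for every $N_1\ge 0$ and $\mathcal{A}_{g,N'}=\mathcal{E}_{\eta,N}\circ\mathcal{A}_{g_1,N_1}$ is a valid post-processing of $\mathcal{E}_{\eta,N}$. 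Substituting $g_1=g/\eta$ into the affine noise identity of $\mathbf{(C_{3.2})}$ and solving for $N'$ shows that $N'$ is monotonically increasing in $N_1$, with minimum value, attained at $N_1=0$, equal to ${N}_{\eta,N}^{(4)}(g)=(N+1)\tfrac{1-\eta}{g-1}$; letting $N_1$ sweep $[0,\infty)$ then covers the whole half-line $N'\ge {N}_{\eta,N}^{(4)}(g)$, which (after intersecting with the physical condition $N'\ge0$) is exactly ${\mathbb{X}}^{({\rm att},-)}_{\eta,N}$.

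For ${\mathbb{X}}^{({\rm att},+)}_{\eta,N}\subseteq{\mathbb{H}}^{(0)}(\mathcal{E}_{\eta,N})$ I would instead use $\mathbf{(C_{3.1})}$, i.e. $\mathcal{E}_{\eta_2,N_2}\circ\mathcal{A}_{g_1,N_1}=\mathcal{E}_{\eta_3,N_3}$, now with $(\eta_3,N_3)=(\eta,N)$ and $(g_1,N_1)=(g,N')$. Here $\eta=\eta_2 g$ forces $\eta_2=\eta/g\in[0,1]$, a legitimate attenuator parameter, so $\mathcal{E}_{\eta,N}=\mathcal{E}_{\eta/g,N_2}\circ\mathcal{A}_{g,N'}$ exhibits $\mathcal{E}_{\eta,N}$ as a post-processing of $\mathcal{A}_{g,N'}$. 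Substituting $\eta_2=\eta/g$ into the affine noise identity of $\mathbf{(C_{3.1})}$, isolating $2N_2+1$ and imposing the single requirement $N_2\ge0$, one obtains the constraint $N'\le {N}_{\eta,N}^{(3)}(g)=N\tfrac{1-\eta}{\eta}\tfrac{g}{g-1}-1$, with equality at $N_2=0$ and every smaller nonnegative $N'$ realized by a correspondingly larger $N_2$. Hence each $(g,N')$ with $0\le N'\le {N}_{\eta,N}^{(3)}(g)$ lies in ${\mathbb{H}}^{(0)}(\mathcal{E}_{\eta,N})$, which is the assertion.

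No step is conceptually deep: the whole argument amounts to picking the right pair of composition rules and then carrying out elementary algebra. The points that require some care are (i) checking that the auxiliary parameter extracted from the multiplicative identity ($g_1=g/\eta$, resp.\ $\eta_2=\eta/g$) always stays inside its admissible interval; (ii) verifying that the one-parameter noise family obtained after eliminating that auxiliary parameter is genuinely monotone, so that its extreme value coincides with the announced boundary function ${N}_{\eta,N}^{(4)}$ or ${N}_{\eta,N}^{(3)}$ and that all points up to (resp.\ beyond) it are actually attained; and (iii) dealing with the degenerate endpoints $g\to1$ and $\eta\to1$, where these boundary functions diverge or vanish and the associated amplifier collapses to the identity, by direct inspection. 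I expect the monotonicity-and-endpoints bookkeeping of (ii) to be the fiddliest part, though it is entirely routine.
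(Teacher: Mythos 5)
Your proposal is correct and follows essentially the same route as the paper: the paper likewise proves ${\mathbb{X}}^{({\rm att},+)}_{\eta,N}\subseteq{\mathbb{H}}^{(0)}(\mathcal{E}_{\eta,N})$ by writing $\mathcal{E}_{\eta,N}=\mathcal{E}_{\eta_2,N_2}\circ\mathcal{A}_{g,N'}$ via $\mathbf{(C_{3.1})}$ with $(\eta_3,N_3)=(\eta,N)$, $(g_1,N_1)=(g,N')$, and ${\mathbb{X}}^{({\rm att},-)}_{\eta,N}\subseteq{\mathbb{L}}^{(0)}(\mathcal{E}_{\eta,N})$ by writing $\mathcal{A}_{g,N'}=\mathcal{E}_{\eta,N}\circ\mathcal{A}_{g_1,N_1}$ via $\mathbf{(C_{3.2})}$ with $(\eta_2,N_2)=(\eta,N)$, $(g_3,N_3)=(g,N')$. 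Your version merely spells out the algebra (solving the multiplicative constraint for the auxiliary parameter and checking monotonicity in the free noise parameter) that the paper leaves implicit.
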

\begin{proof}
To prove the first inclusion   we use the decomposition rule~$\mathbf{(C_{3.1})}$ of Tab.~\ref{tab1}, which 
setting $(\eta_3,N_3)= (\eta,N)$, $(g_1,N_1)=(g,N')$, and arbitrary $\eta_2\in[0,1]$, $N_2\geq 0$, 
allows us to write
\begin{eqnarray} \label{dec5} 
{\cal E}_{\eta,N} = {\cal E}_{\eta_2,N_2} \circ  {\cal A}_{g,N'} \;, 
\end{eqnarray} 
 for all
$g\geq 1$ and $N'\leq {N}_{\eta,N}^{(3)}(g)$, i.e. for the entire set ${\mathbb{X}}^{({\rm att},+)}_{\eta,N}$. 
The second inclusion follows instead from Eq.~$\mathbf{(C_{3.2})}$ of Tab.~\ref{tab1}, which setting 
setting $(\eta_2,N_2)= (\eta,N)$, $(g_3,N_3)=(g,N')$, and arbitrary $g_1\geq 1$, $N_1\geq 0$ allows us to write 
\begin{eqnarray} \label{dec6} 
{\cal A}_{g,N'} = {\cal E}_{\eta,N} \circ {\cal A}_{g_1,N_1}   \;, 
\end{eqnarray}  for all
$g\geq 1$ and $N'\geq {N}_{\eta,N}^{(4)}(g)$,  i.e. for the entire set ${\mathbb{X}}^{({\rm att},-)}_{\eta,N}$. 
\end{proof}
In a similar way,
given
 \begin{eqnarray} \label{ddf} 
	{N}_{g,N}^{(3)}(\eta)&:=& N \left(\tfrac{g-1}{1-\eta}\right)-1\;, \\
		{N}_{g,N}^{(4)}(\eta)&:=& (N+1)
		\left(\tfrac{g-1}{g}\right)\left(\tfrac{
		\eta}{1-\eta}\right)  \;, 
		\label{ddf1} 
	\end{eqnarray}
we have that 

\begin{theorem}\label{th33new}
	The amplifier map ${\cal A}_{g,N}$ admits 
\begin{eqnarray}\label{defC+am} 
 {\mathbb{X}}^{({\rm amp},+)}_{g,N} &:=\left\{  {\scriptstyle{ (\eta,N'): \eta\in [0,1], 0\leq N' \leq {N}_{g,N}^{(3)}(\eta)}} \right\},
 \\\nonumber 
  {\mathbb{X}}^{({\rm amp},-)}_{g,N} &:=\left\{  {\scriptstyle{ (\eta,N'): g\geq 1, N' \geq \max\{ {N}_{g,N}^{(4)}(\eta)}},0\}  \right\},
	\end{eqnarray}  
	(yellow and light blue areas on the left-hand-side of the bottom  panel of Fig.~\ref{figurnew3})
as subsets of  ${\mathbb{H}}^{(0)}({\cal A}_{g,N})$ and $ {\mathbb{L}}^{(0)}({\cal A}_{g,N})$ respectively , i.e. 
	\begin{equation}  {\mathbb{X}}^{({\rm amp},+)}_{g,N}  \subseteq {\mathbb{H}}^{(0)}({\cal A}_{g,N})\;, \qquad 
	{\mathbb{X}}^{({\rm amp},-)}_{g,N}  \subseteq {\mathbb{L}}^{(0)}({\cal A}_{g,N})\label{IMPO12ddnewnew}\;.\end{equation}
\end{theorem}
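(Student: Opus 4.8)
The plan is to mirror the proof of Theorem~\ref{th3new}, interchanging the roles of attenuator and amplifier: where that argument invoked $\mathbf{(C_{3.1})}$ for the high-ground inclusion and $\mathbf{(C_{3.2})}$ for the low-ground one, here I would use $\mathbf{(C_{3.2})}$ for the first and $\mathbf{(C_{3.1})}$ for the second. Throughout, the identifications~(\ref{attamp}) let me read the resulting attenuator/amplifier concatenations directly as the $\Phi_{x,M}$-concatenations appearing in the definitions~(\ref{alternativedefHIGHsimple}) and~(\ref{alternativedefLOsimple}) of ${\mathbb{H}}^{(0)}$ and ${\mathbb{L}}^{(0)}$.

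For ${\mathbb{X}}^{({\rm amp},+)}_{g,N}\subseteq{\mathbb{H}}^{(0)}({\cal A}_{g,N})$ I would set $(g_3,N_3)=(g,N)$ and $(\eta_2,N_2)=(\eta,N')$ in rule $\mathbf{(C_{3.2})}$ of Tab.~\ref{tab1}, keeping $(g_1,N_1)$ as a free amplifier, which gives
\begin{eqnarray}
{\cal A}_{g,N}={\cal E}_{\eta,N'}\circ{\cal A}_{g_1,N_1}\;,
\end{eqnarray}
i.e. exactly the form $\Phi_{x,M}=\Phi_{x',M'}\circ\Phi_{\bar x,\bar M}$ of~(\ref{alternativedefHIGHsimple}). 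One then solves the affine relation of $\mathbf{(C_{3.2})}$ for $N_1$, imposes $N_1\geq 0$, and checks that the constraint on $(\eta,N')$ collapses to $N'\leq{N}_{g,N}^{(3)}(\eta)$, saturated at $N_1=0$ so that the whole region ${\mathbb{X}}^{({\rm amp},+)}_{g,N}$ is swept out; one also checks that the forced value $g_1=g/\eta$ automatically obeys $g_1\geq 1$ because $\eta\leq 1\leq g$. Symmetrically, for ${\mathbb{X}}^{({\rm amp},-)}_{g,N}\subseteq{\mathbb{L}}^{(0)}({\cal A}_{g,N})$ I would set $(g_1,N_1)=(g,N)$ and $(\eta_3,N_3)=(\eta,N')$ in rule $\mathbf{(C_{3.1})}$, keeping $(\eta_2,N_2)$ free, obtaining
\begin{eqnarray}
{\cal E}_{\eta,N'}={\cal E}_{\eta_2,N_2}\circ{\cal A}_{g,N}\;,
\end{eqnarray}
which is the form $\Phi_{x',M'}=\Phi_{\bar x,\bar M}\circ\Phi_{x,M}$ of~(\ref{alternativedefLOsimple}); solving for $N_2$ and imposing $N_2\geq 0$ should yield $N'\geq\max\{{N}_{g,N}^{(4)}(\eta),0\}$, saturated at $N_2=0$, with the forced $\eta_2=\eta/g\in[0,1]$ again automatic.

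I do not anticipate a genuine obstacle: the only real work is the elementary algebra that turns the affine conditions of Tab.~\ref{tab1} (written there in terms of the shifted variables $2N_i+1$) into the closed forms~(\ref{ddf}), (\ref{ddf1}), together with the bookkeeping that the positivity and range constraints on the auxiliary parameters carve out nothing beyond the stated regions. The one place that calls for a word of care is the degenerate edge $\eta\to 0$ of ${\mathbb{X}}^{({\rm amp},+)}_{g,N}$, where $g_1=g/\eta$ diverges and ${\cal E}_{0,N'}$ degenerates into a state replacer: this slice is empty whenever ${\cal A}_{g,N}$ is non-EB (the regime of interest here, cf. condition~(\ref{notEB})), so it does not affect the construction and can otherwise be read as a boundary/limit of the region.
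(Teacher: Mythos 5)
Your proposal is correct and follows essentially the same route as the paper's own proof: the paper likewise derives the first inclusion from rule $\mathbf{(C_{3.2})}$ with $(g_3,N_3)=(g,N)$, $(\eta_2,N_2)=(\eta,N')$ giving ${\cal A}_{g,N}={\cal E}_{\eta,N'}\circ{\cal A}_{g_1,N_1}$, and the second from rule $\mathbf{(C_{3.1})}$ with $(\eta_3,N_3)=(\eta,N')$, $(g_1,N_1)=(g,N)$ giving ${\cal E}_{\eta,N'}={\cal E}_{\eta_2,N_2}\circ{\cal A}_{g,N}$. Your added checks on the ranges of the forced parameters and the saturation at vanishing auxiliary noise are consistent with (and slightly more explicit than) the paper's argument.
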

\begin{proof} The above inclusions are just an alternative way to cast  the results of
Property~\ref{th3new}. For the sake of symmetry we provide however an independent derivation. The first 
can be proven by using the decomposition rule~$\mathbf{(C_{3.2})}$ of Tab.~\ref{tab1}, which 
setting $(g_3,N_3)= (g,N)$, $(\eta_2,N_2)=(\eta,N')$, and $g_1\geq 1$, $N_1\geq 0$, 
allows us to write 
\begin{eqnarray} \label{dec7} 
{\cal A}_{g,N} = {\cal E}_{\eta,N'} \circ {\cal A}_{g_1,N_1}\;, 
\end{eqnarray} 
 for all
$\eta\in[0,1]$ and $0\leq N'\leq {N}_{g,N}^{(3)}(\eta)$, i.e. for all the points of ${\mathbb{X}}^{({\rm amp},+)}_{g,N}$. 
The second inclusion of Eq.~(\ref{IMPO12ddnewnew}) 
follows instead  from Eq.~$\mathbf{(C_{3.1})}$ of Tab.~\ref{tab1}, which setting 
setting $(\eta_3,N_3)= (\eta,N')$, $(g_1,N_1)=(g,N)$, and   $\eta_2\in [0,1]$, $N_2\geq 0$, gives  us 
\begin{eqnarray} \label{dec8} 
{\cal E}_{\eta,N'}  = {\cal E}_{\eta_2,N_2}  \circ {\cal A}_{g,N}\;, 
\end{eqnarray} 
 for all
$\eta\in [0,1]$ and $N'\geq {N}_{g,N}^{(4)}(\eta)$, i.e. for all the points of ${\mathbb{X}}^{({\rm amp},-)}_{g,N}$. 
\end{proof}

Inclusions which are different 
with respect to the one reported in the Properties can be obtained by reversing
the order of the decompositions employed in the proofs. Such  inequalities however  are provably less performant than those presented. 
For instance setting $(\eta_3,N_3)= (\eta',N')$ and $(\eta_2,N_2)= (\eta,N)$
 in  Eq.~$\mathbf{(C_1)}$
allows us  to determine that $(\eta',N') \in  {\mathbb{L}}({\cal E}_{\eta,N})$
for all 
\begin{eqnarray} \eta'\leq \eta\;, \qquad N'\geq  \left(\frac{1-\eta}{1-\eta'}\right)N\;, \label{new} \end{eqnarray}  a
result which is implied by~(\ref{IMPO1}) since the set ${\mathbb{L}}^{({\rm att},2)}_{\eta,N}$ 
includes all points fulling the condition~(\ref{new}). 
		Similarly using Eq.~$\mathbf{(C_{4.1})}$ instead
		of~$\mathbf{(C_{3.1})}$ allows one to show
	 that i)  $(\eta',N') \in {\mathbb{L}}({\cal E}_{\eta,N})$
	 for all $\eta'\geq \eta$ and $N'\geq  	  \frac{\eta'-\eta+\eta'(1-\eta)N}{\eta(1-\eta')}$
	 (a condition that is already implied by (\ref{IMPO12})); ii) 
	 $(\eta',N') \in {\mathbb{H}}({\cal E}_{\eta,N})$
 for all 
	  $\eta'\leq \eta$ and $N'\leq  
	  \frac{\eta'-\eta+\eta'(1-\eta)N}{\eta(1-\eta')}$ (which again is implied by~(\ref{IMPO12})); iii) $(g,N')\in {\mathbb{H}}({\cal E}_{\eta,N})$
	 for all $g\geq 1$ and $N'\leq  	 N\left( \tfrac{1-\eta}{g-1}\right)  -1$ 
	  (implied by~the first inclusion of Eq.~(\ref{IMPO12amp})); {\it iv)} 
	  $(g,N')\in {\mathbb{L}}({\cal E}_{\eta,N})$
 for all $g\geq 1$ and $N'\geq  	(N+1) \left( \tfrac{g}{g-1}\right) \left( \tfrac{1-\eta}{\eta}\right)$ 
	  (implied by the second inclusion of  Eq.~(\ref{IMPO12amp})). 
Putting together these results  we can hence conclude that the two-elements concatenation regions 
${\mathbb{L}}^{(0)}({\cal E}_{\eta,N})$ and ${\mathbb{L}}^{(0)}({\cal A}_{g,N})$ coincide respectively with 
${\mathbb{L}}^{\rm{(att)}}_{\eta,N} \bigcup {\mathbb{X}}^{(\rm{att},-)}_{\eta,N}$ and  
${\mathbb{L}}^{\rm{(amp)}}_{g,N} \bigcup {\mathbb{X}}^{(\rm{amp},-)}_{g,N}$, a condition which  translated into the parametrization~(\ref{defgenerale})
can be expressed as 
  \begin{eqnarray} \label{LOW}
{\mathbb{L}}^{(0)}_{x,M}&=& \left\{ \begin{array}{ll} {\mathbb{L}}^{\rm{(att)}}_{x,M/(1-x)} \bigcup {\mathbb{X}}^{(\rm{att},-)}_{x,M/(1-x)} & \mbox{for 
$x\in [0,1]$} \;,  \\ \\
{\mathbb{L}}^{\rm{(amp)}}_{x,M/(x-1)} \bigcup {\mathbb{X}}^{(\rm{amp},-)}_{x,M/(x-1)} & \mbox{for 
$x\geq  1$}\;.
\end{array} \right. 
\end{eqnarray} 
Analogously we have that ${\mathbb{H}}^{(0)}({\cal E}_{\eta,N})$ and 
${\mathbb{H}}^{(0)}({\cal A}_{g,N})$ correspond respectively to  
${\mathbb{H}}^{\rm{(att)}}_{\eta,N} \bigcup {\mathbb{X}}^{(\rm{att},+)}_{\eta,N}$ and $
{\mathbb{H}}^{\rm{(amp)}}_{g,N} \bigcup {\mathbb{X}}^{(\rm{amp},+)}_{g,N}$, so that 
\begin{eqnarray} 
{\mathbb{H}}^{(0)}_{x,M}&=& \left\{ \begin{array}{ll} {\mathbb{H}}^{\rm{(att)}}_{x,M/(1-x)}  \bigcup {\mathbb{X}}^{(\rm{att},+)}_{x,M/(1-x)} & \mbox{for 
$x\in [0,1]$} \;,  \\ \\
{\mathbb{H}}^{\rm{(amp)}}_{x,M/(x-1)}  \bigcup {\mathbb{X}}^{(\rm{amp},+)}_{x,M/(x-1)} & \mbox{for 
$x\geq  1$}\;. 
\end{array} \right.\label{HIGH} 
\end{eqnarray} 
The border lines of these regions are provided by the curves ${M}_{x,M}^{{(j)}}(x')$ of Tab.~\ref{tab2}
obtained from ${N}_{\eta,N}^{(1,2)}(\eta')$, ${N}_{g,N}^{(1,2)}(g')$,
${N}_{\eta,N}^{(3,4)}(g')$, and ${N}_{g,N}^{(1,2)}(\eta')$ via the substitutions 
\begin{eqnarray} 
{M}_{x,M}^{{(j)}}(x') := |1-x'| {N}_{x,M/|1-x|}^{(j)}(x')\;. 
\end{eqnarray} 
For instance one has that 
${\mathbb{L}}^{(0)}_{x,M}$ is given by  the points $(x',M')$ such that 
\begin{equation} \label{comp1} 
 M' \geq \left\{ \begin{array}{lr} {M}_{x,M}^{{(1)}}(x')= M\frac{x'}{x},  &  \quad  (0\leq x'\leq x), \\
{M}_{x,M}^{{(2)}}(x')= M-x +x', &\quad   (x\leq x'\leq 1), \\
{M}_{x,M}^{{(4)}}(x')= M+1-x ,  & \quad   (1\leq  x'), \\
\end{array} \right.\end{equation} 
for  $x\leq 1$, and 
\begin{equation}\label{comp2} 
 M' \geq \left\{ \begin{array}{lr} {M}_{x,M}^{{(4)}}(x')= (M+x-1)\frac{x'}{x},  &
   (0\leq x'\leq 1), \\
{M}_{x,M}^{{(2)}}(x')= (M-1)\frac{x'}{x} +1,  &  (1\leq x'\leq x), \\
{M}_{x,M}^{{(1)}}(x')=M,  &   (x\leq  x'), \\
\end{array} \right.
\end{equation} 
for $x\geq 1$. 
Viceversa we have that 
${\mathbb{H}}^{(0)}_{x,M}$ includes all points $(x',M')$ such that 
\begin{equation}\label{comp3} 
0\leq M' \leq \left\{ \begin{array}{lr} {M}_{x,M}^{{(2)}}(x')= M-x +x',  &  (0\leq x'\leq x), \\
{M}_{x,M}^{{(1)}}(x')= M\frac{x'}{x}, &  (x\leq x'\leq 1), \\
{M}_{x,M}^{{(3)}}(x')= (M-x) \frac{x'}{x}+1,  &  (1\leq  x'), \\
\end{array} \right.\end{equation} 
for  $x\leq 1$, and 
\begin{equation}\label{comp4} 
0\leq M' \leq \left\{ \begin{array}{lr} {M}_{x,M}^{{(3)}}(x')=M-1 +x',  &  (0\leq x'\leq 1), \\
{M}_{x,M}^{{(1)}}(x')= M,  &  (1\leq x'\leq x), \\
{M}_{x,M}^{{(2)}}(x')=(M-1)\frac{x'}{x} +1,  &  (x\leq  x'), \\
\end{array} \right.
\end{equation} 
for $x\geq 1$. 
		\begin{center} 		
			\begin{table*}[t!]
\begin{tabular}{|lllr|}
\hline
${N}_{\eta,N}^{{(1)}}(\eta')$ &$\mapsto$& 
$ {M}_{x,M}^{{(1)}}(x'):= (1-\eta'){N}_{\eta,N}^{{(1)}}(\eta')\Big|_{\substack{\scriptstyle{\eta=x,\eta'=x'}\\
\scriptstyle{N=M/(1-x)}}} = M {x' }/x\;,
$&
  \\ &&& for $x,x'\in[0,1]$  \\
${N}_{\eta,N}^{{(2)}}(\eta')$ &$\mapsto$& 
$ {M}_{x,M}^{{(2)}}(x'):=(1-\eta'){N}_{\eta,N}^{{(2)}}(\eta')\Big|_{\substack{\scriptstyle{\eta=x,\eta'=x'}\\
\scriptstyle{N=M/(1-x)}}} =  M-x +x' \;,
$& 
  \\ \hline
  ${N}_{g,N}^{{(1)}}(g')$ &$\mapsto$& 
$ {M}_{x,M}^{{(1)}}(x'):=(g'-1){N}_{g,N}^{{(1)}}(g')\Big|_{\substack{\scriptstyle{g=x,g'=x'}\\
\scriptstyle{N=M/(x-1)}}}  =  M\;,$& 
  \\   &&&for $x,x'\geq 1$\\
${N}_{g,N}^{(2)}(g')$ &$\mapsto$& 
$ {M}_{x,M}^{(2)}(x'):= (g'-1){N}_{g,N}^{{(2)}}(g')\Big|_{\substack{\scriptstyle{g=x,g'=x'}\\
\scriptstyle{N=M/(x-1)}}} = 
(M-1)x'/x +1\;,
$& 
  \\ \hline
    ${N}_{\eta,N}^{(3)}(g')$ &$\mapsto$& 
$ {M}_{x,M}^{(3)}(x'):=(g'-1){N}_{\eta,N}^{(3)}(g')\Big|_{\substack{\scriptstyle{\eta=x,g'=x'}\\
\scriptstyle{N=M/(1-x)}}}= 
(M-x) x'/x+1\;,$& 
 \\ &&&   for $x\in[0,1]$  and $x'\geq 1$ \\ 
${N}_{\eta,N}^{(4)}(g')$ &$\mapsto$& 
$ {M}_{x,M}^{(4)}(x'):= (g'-1) {N}_{\eta,N}^{(4)}(g') \Big|_{\substack{\scriptstyle{\eta=x,g'=x'}\\
\scriptstyle{N=M/(1-x)}}}=
M+ 1-x\;,
$&  \\ \hline
      ${N}_{g,N}^{(3)}(\eta')$ &$\mapsto$& 
${M}_{x,M}^{(3)}(x'):=(1-\eta') {N}_{g,N}^{(3)}(\eta')  \Big|_{\substack{\scriptstyle{g=x,\eta'=x'}\\
\scriptstyle{N=M/(x-1)}}}=
 M-1+x'\;,$& \\
&&&   for $x\geq 1$ and $x'\in[0,1]$ 
  \\ ${N}_{g,N}^{(4)}(\eta')$ &$\mapsto$& 
$ {M}_{x,M}^{(4)}(x'):=(1-\eta') {N}_{g,N}^{(4)}(\eta')\Big|_{\substack{\scriptstyle{g=x,\eta'=x'}\\
\scriptstyle{N=M/(x-1)}}}= 
 (M+x-1)x'/x\;,$& 
  \\ \hline
\end{tabular}
				\caption{Border lines of the regions ${\mathbb{L}}^{(0)}_{x,M}$  of Eq.~(\ref{LOW}) and ${\mathbb{H}}^{(0)}_{x,M}$ of Eq.~(\ref{HIGH})  
				for the $\Phi_{x,M}$ map. 
				\label{tab2}} 
			\end{table*}
		\end{center}

\subsection{Three-elements concatenation regions for  channels $\Phi_{x,M}$ fulfilling Eq.~(\ref{notEB})}~\label{three-times} 
Comparing the l.h.s. of  Eqs.~(\ref{comp1})--(\ref{comp4}) with the functions
$f^{(1)}_{x,M}(x')$ and $f^{(1)}_{x,M}(x') $ of Eq.~(\ref{effe2}), one can easily check 
 that  for channels $\Phi_{x,M}$ which  fulfil Eq.~(\ref{notEB}), the  two-element concatenation regions 
 ${\mathbb{L}}^{(0)}_{x,M}$ and ${\mathbb{H}}^{(0)}_{x,M}$ can be expressed as 
   \begin{eqnarray} \label{analytical10} 
  {\mathbb{L}}^{(0)}_{x,M} &=&  
  \Big\{ (x',M')\in\left(\mathbb{R}^{+} \right)^{2} :  
  \\ \nonumber && \quad  M'\geq \max\{  f^{(1)}_{x,M}(x'),  f^{(2)}_{x,M}(x')\} \Big\} \;,
\\ \label{analytical20} 
  {\mathbb{H}}^{(0)}_{x,M} &=&  
  \Big\{ (x',M')\in\left(\mathbb{R}^{+} \right)^{2} :  
  \\ \nonumber && \quad  M'\leq \min\{  f^{(1)}_{x,M}(x'),  f^{(2)}_{x,M}(x')\} \Big\} \;.
  \end{eqnarray}
Accordingly under the condition~(\ref{notEB}), the proof  of the identities~(\ref{analytical1}) and (\ref{analytical2})  reduces hence to show that  
the two-element concatenation regions ${\mathbb{L}}^{(0)}_{x,M}$ and ${\mathbb{H}}^{(0)}_{x,M}$ correspond to the three-elements concatenations sets
 ${\mathbb{L}}_{x,M}$ and ${\mathbb{H}}_{x,M}$, i.e. 
 \begin{corollary}\label{inclusionsrules1}
Given $(x,M)$ such that $M\leq M_{\text{EB}}(x)$ we have that 
\begin{eqnarray}\label{ide1} 
 {\mathbb{L}}^{(0)}_{x,M} =  {\mathbb{L}}_{x,M}\;, \qquad   {\mathbb{H}}^{(0)}_{x,M} =  {\mathbb{H}}_{x,M}\;. \end{eqnarray}
\end{corollary}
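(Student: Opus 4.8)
The plan is as follows. The inclusions ${\mathbb{L}}^{(0)}_{x,M}\subseteq{\mathbb{L}}_{x,M}$ and ${\mathbb{H}}^{(0)}_{x,M}\subseteq{\mathbb{H}}_{x,M}$ are already recorded in Eq.~(\ref{impoinclusions}), so it will suffice to prove the reverse inclusions under the hypothesis $M\le M_{\text{EB}}(x)$. I would first reduce the three-element problem to a two-element one, and then establish a nesting property for the two-element regions. For the reduction, note that a threefold concatenation can always be regrouped as two twofold ones: given $(x',M')\in{\mathbb{L}}_{x,M}$, write $\Phi_{x',M'}=\Phi_{\bar{x}_1,\bar{M}_1}\circ\Phi_{x,M}\circ\Phi_{\bar{x}_2,\bar{M}_2}$ and set $\Phi_{y,N}:=\Phi_{x,M}\circ\Phi_{\bar{x}_2,\bar{M}_2}$, which is a PI-GBC by~(\ref{defNnew}); then $(y,N)\in{\mathbb{L}}^{(0)}_{x,M}$ and, since $\Phi_{x',M'}=\Phi_{\bar{x}_1,\bar{M}_1}\circ\Phi_{y,N}$, also $(x',M')\in{\mathbb{L}}^{(0)}_{y,N}$. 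Conversely, writing the identity as $\Phi_{1,0}$, one checks that any such pair yields a point of ${\mathbb{L}}_{x,M}$, so in fact
\begin{equation}\label{unionLplan}
{\mathbb{L}}_{x,M}=\bigcup_{(y,N)\in{\mathbb{L}}^{(0)}_{x,M}}{\mathbb{L}}^{(0)}_{y,N}\;.
\end{equation}
Hence the first identity in~(\ref{ide1}) will follow once we prove the nesting property: whenever $M\le M_{\text{EB}}(x)$,
\begin{equation}\label{nestLplan}
(y,N)\in{\mathbb{L}}^{(0)}_{x,M}\qquad\Longrightarrow\qquad{\mathbb{L}}^{(0)}_{y,N}\subseteq{\mathbb{L}}^{(0)}_{x,M}\;.
\end{equation}

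To establish~(\ref{nestLplan}) I would work with the explicit descriptions~(\ref{comp1})--(\ref{comp4}) of the two-element regions, which for $(x,M)$ satisfying~(\ref{notEB}) reduce to~(\ref{analytical10}), so that the lower boundary of ${\mathbb{L}}^{(0)}_{x,M}$ is $\max\{f^{(1)}_{x,M}(x'),f^{(2)}_{x,M}(x')\}$. The useful structural fact is that the $x'$-dependence of $f^{(1)},f^{(2)}$ is base-point independent: writing $c^{(1)}_{x,M}:=M+(1-x)\Theta(1-x)$ and $d^{(2)}_{x,M}:=\tfrac{1}{x}[M+(x-1)\Theta(x-1)]$ one has $f^{(1)}_{x,M}(x')=c^{(1)}_{x,M}+(x'-1)\Theta(1-x')$ and $f^{(2)}_{x,M}(x')=x'\,d^{(2)}_{x,M}-(x'-1)\Theta(x'-1)$. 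A short algebraic check then shows $N\ge f^{(1)}_{x,M}(y)\Rightarrow c^{(1)}_{y,N}\ge c^{(1)}_{x,M}$ and $N\ge f^{(2)}_{x,M}(y)\Rightarrow d^{(2)}_{y,N}\ge d^{(2)}_{x,M}$ -- which is exactly the capacity-ordering content of Corollaries~\ref{monotonicity} and~\ref{monotonicitynew1}. Thus if $(y,N)\in{\mathbb{L}}^{(0)}_{x,M}$ both inequalities hold, so $f^{(1)}_{y,N}\ge f^{(1)}_{x,M}$ and $f^{(2)}_{y,N}\ge f^{(2)}_{x,M}$ pointwise, and therefore ${\mathbb{L}}^{(0)}_{y,N}\subseteq{\mathbb{L}}^{(0)}_{x,M}$. (An equivalent, more hands-on route is to absorb all the noise parameters into the outermost map and minimize the resulting piecewise-linear $M$-coordinate over the split of $x'/x$ between the two extra maps, which reproduces the boundaries~(\ref{comp1})--(\ref{comp4}) directly.)

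The one step that genuinely needs care -- and the place where the hypothesis $M\le M_{\text{EB}}(x)$ is consumed -- is that an intermediate point $(y,N)\in{\mathbb{L}}^{(0)}_{x,M}$ need not itself satisfy~(\ref{notEB}); for such $(y,N)$, ${\mathbb{L}}^{(0)}_{y,N}$ is strictly larger than $\{M'\ge\max(f^{(1)}_{y,N},f^{(2)}_{y,N})\}$, so one must compare the genuine boundaries~(\ref{comp1})--(\ref{comp4}) of ${\mathbb{L}}^{(0)}_{y,N}$ and ${\mathbb{L}}^{(0)}_{x,M}$. The finitely many sign regimes (reference map attenuator or amplifier; $x'$ below, between, or above $1$, $x$ and $y$) are routine piecewise-linear comparisons, and in each of them the inequality $M\le\min\{1,x\}$ -- e.g. in the guise $M x'/x\le x'=M_{\text{EB}}(x')$ -- is precisely what keeps ${\mathbb{L}}^{(0)}_{y,N}$ from dipping below ${\mathbb{L}}^{(0)}_{x,M}$ as $x'\to0$; the statement is indeed false without it, the EB case being governed instead by the $x\to0,\,M\to0$ limit of Sec.~\ref{three-timesEB}. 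I expect this case distinction to be the main (if elementary) obstacle. Finally, the high-ground identity in~(\ref{ide1}) follows by the same scheme with the order of concatenations reversed, giving ${\mathbb{H}}_{x,M}=\bigcup_{(y,N)\in{\mathbb{H}}^{(0)}_{x,M}}{\mathbb{H}}^{(0)}_{y,N}$ and reducing to $(y,N)\in{\mathbb{H}}^{(0)}_{x,M}\Rightarrow{\mathbb{H}}^{(0)}_{y,N}\subseteq{\mathbb{H}}^{(0)}_{x,M}$; here the EB subtlety does not arise, since $M\le M_{\text{EB}}(x)$ forces every $(y,N)\in{\mathbb{H}}^{(0)}_{x,M}$ to obey~(\ref{notEB}) as well (the upper envelope $\min\{f^{(1)}_{x,M},f^{(2)}_{x,M}\}$ lies below $M_{\text{EB}}$), so that~(\ref{analytical20}) applies throughout and the same monotone-constant computation with reversed inequalities closes it -- alternatively one may invoke the complementary relation~(\ref{complementary}) together with the low-ground result.
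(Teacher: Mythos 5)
Your proposal is correct, and its skeleton is the same as the paper's: both reduce the three-element statement to the nesting property $(y,N)\in{\mathbb{L}}^{(0)}_{x,M}\Rightarrow{\mathbb{L}}^{(0)}_{y,N}\subseteq{\mathbb{L}}^{(0)}_{x,M}$ (Eq.~(\ref{ide22})) and then close with the same regrouping of $\Phi_{\bar{x}_1,\bar{M}_1}\circ\Phi_{x,M}\circ\Phi_{\bar{x}_2,\bar{M}_2}$ into two two-element steps. Where you genuinely diverge is in how the nesting lemma is verified. The paper uses the complementarity relation~(\ref{complementary}) to reduce everything to a transitivity statement for high-ground membership and then argues geometrically, comparing half-planes and invoking convexity of the polytopes of Eqs.~(\ref{analytical10})--(\ref{analytical20}) case by case with the aid of Fig.~\ref{figure9inclusion}. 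You instead exploit the affine structure of $f^{(1)}_{x,M}$ and $f^{(2)}_{x,M}$ in Eq.~(\ref{effe2}): the checks $N\geq f^{(1)}_{x,M}(y)\Leftrightarrow c^{(1)}_{y,N}\geq c^{(1)}_{x,M}$ and $N\geq f^{(2)}_{x,M}(y)\Leftrightarrow d^{(2)}_{y,N}\geq d^{(2)}_{x,M}$ are in fact exact equivalences, so the pointwise ordering of the boundary envelopes follows in one line whenever both base points satisfy~(\ref{notEB}). This is arguably cleaner and more self-contained than the figure-based argument. You also correctly isolate the only delicate point, namely that an intermediate $(y,N)\in{\mathbb{L}}^{(0)}_{x,M}$ may itself be EB so that its low-ground region is the \emph{min} rather than the \emph{max} envelope; your resolution (the genuine boundaries~(\ref{comp1})--(\ref{comp2}) of an EB base point lie above $M_{\text{EB}}(x')$ while those of a non-EB base point lie below it, so containment holds a fortiori) is sound, and your observation that no such issue arises on the high-ground side because ${\mathbb{H}}^{(0)}_{x,M}$ stays below the EB line is also correct. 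The paper's complementarity reduction absorbs this same case into the geometric analysis rather than treating it separately, so nothing is missing from either route.
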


\begin{proof} 
 This result  can be derived by noticing that for channels $\Phi_{x,M}$ which are non-EB, ${\mathbb{L}}^{(0)}_{x,M}$ and ${\mathbb{H}}^{(0)}_{x,M}$
 fulfil the same ordering rules of ${\mathbb{L}}_{x,M}$ and ${\mathbb{H}}_{x,M}$ given in 
 Eqs.~(\ref{Hide11}) and (\ref{Hide22}), i.e. 
\begin{equation}
M < M_{\text{EB}}(x)  \Longrightarrow \left\{ 
\begin{array}{l} 
 {\mathbb{L}}^{(0)}_{x',M'} \subseteq  {\mathbb{L}}^{(0)}_{x,M} \;,  \forall (x',M') \in {\mathbb{L}}^{(0)}_{x,M} 
 \\\\
 \label{ide22} 
 {\mathbb{H}}^{(0)}_{x',M'} \subseteq  {\mathbb{H}}^{(0)}_{x,M} \;,  \forall (x',M') \in {\mathbb{H}}^{(0)}_{x,M} 
 \end{array} \right.
\end{equation}  
The derivation of these  identities   relies on
 a series geometric relations in which one has to compare the relative size and position of the
 two-dimensional polytopes defined in Eqs.~(\ref{analytical10}) and (\ref{analytical20}). It suffices to do show this for the high ground region. In fact, since $(x',M')\in {\mathbb{L}}^{(0)}_{x,M} $ if and only if $(x,M)\in {\mathbb{H}}^{(0)}_{x',M'}$, we have that $(x'',M'')\in {\mathbb{L}}^{(0)}_{x,M}$, $(x'',M'')\in {\mathbb{L}}^{(0)}_{x',M'}$, $(x',M')\in {\mathbb{L}}^{(0)}_{x,M}$ if and only if $(x,M)\in {\mathbb{H}}^{(0)}_{x'',M''}$, $(x',M')\in {\mathbb{H}}^{(0)}_{x'',M''}$, $(x,M)\in {\mathbb{H}}^{(0)}_{x',M'}$. Therefore it suffices to prove only that the latter is true whenever  $(x',M')\in {\mathbb{H}}^{(0)}_{x'',M''}$ and $(x,M)\in {\mathbb{H}}^{(0)}_{x',M'}$. This can be proven by case-by-case inspection, and we outline the derivation with the aid of Fig.~\ref{figure9inclusion}. 
\begin{figure}
	\includegraphics[width=\columnwidth]{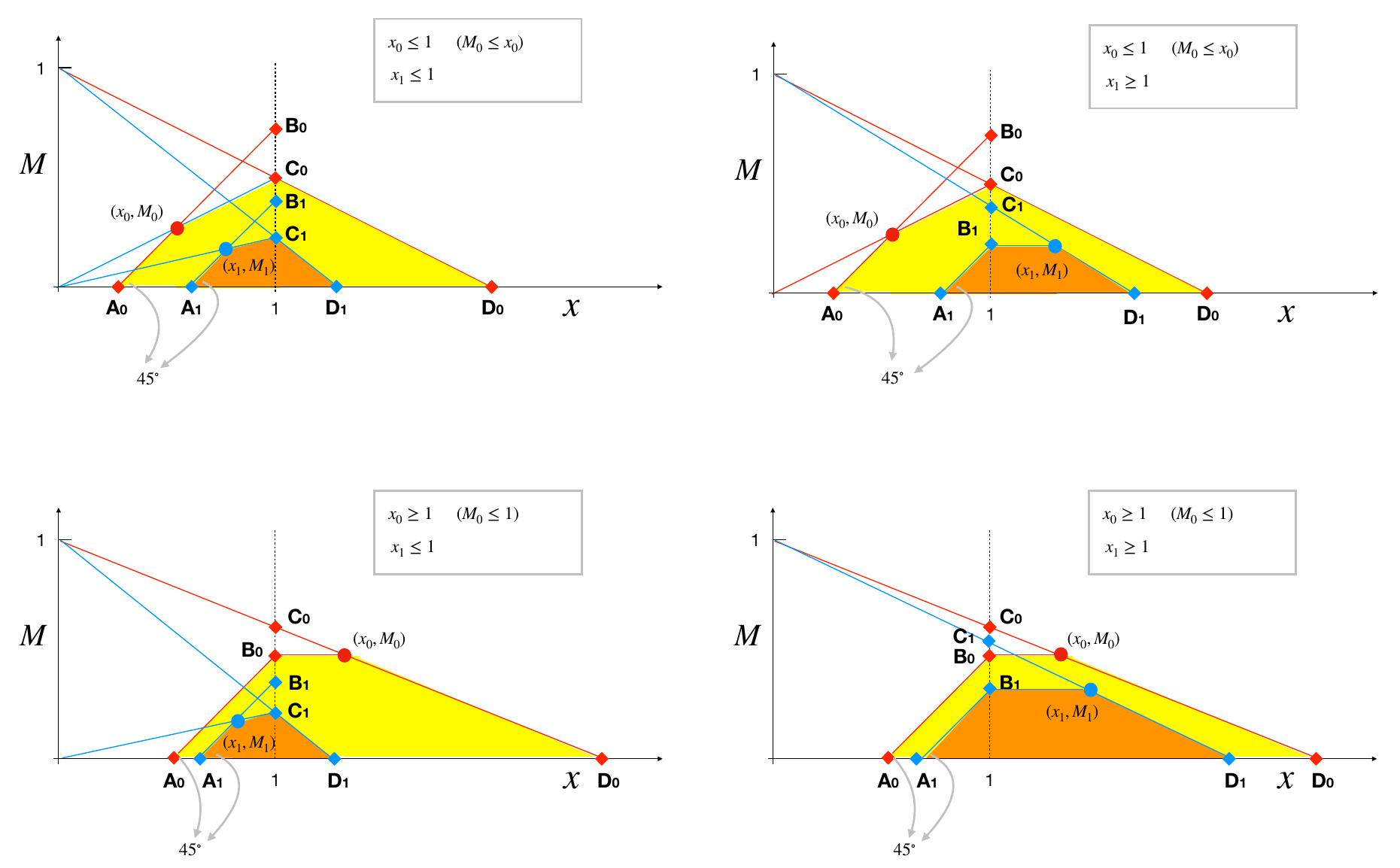}
	\caption{Graphical explanation of inclusions rules of Eq.~(\ref{ide22}). The border of the high ground region can have two different shapes, depending on $x_i$ being smaller or larger than 1 (the degenerate case $x_i=1$ is not plotted but it is analogous). In the case $x_i<1$, the region is obtained as the intersection of $\left(\mathbb{R^+}\right)^2$ with the half-planes delimited by a $45$ degrees line passing through $(x_i,M_i)$, a line passing through the origin and $(x_i,M_i)$ and intersecting the $x=1$ line at $\mathbf{C_i}$, and a line passing through $\mathbf{C_i}$ and $(0,1)$. In the case $x_i>1$, the region is obtained as the intersection of $\left(\mathbb{R^+}\right)^2$ with the half-planes delimited by an horizontal line passing through $(x_i,M_i)$ and intersecting the $x=1$ line at $\mathbf{B_i}$, a $45$ degrees line intersecting $\mathbf{B_i}$, and a line passing through $(x_i,M_i)$ and $(0,1)$.
	If $x_0<1$ and $x_1<1$, all the the intersection of $\left(\mathbb{R^+}\right)^2$  and the half-planes individuated by the segments $\mathbf{A_1}-(x,M)$, $(x,M)-\mathbf{C_1}$ and $\mathbf{C_1}-\mathbf{D_1}$  are contained in the corresponding region individuated by $x_0$, and therefore their intersection also satisfy the same inclusions. If $x_0>1$ and $x_1>1$, all the the intersection of $\left(\mathbb{R^+}\right)^2$  and the half-planes individuated by the segments $\mathbf{A_1}-\mathbf{B_1}$, $\mathbf{B_1}-(x,M)$,and $(x,M)-\mathbf{D_1}$ are contained in the corresponding region individuated by $x_0$, and therefore their intersection also satisfy the same inclusions. 
	If $x_0<1$ and $x_1>1$, a calculation shows that in the non EB region the region on the left is convex, therefore it contains the triangle $\mathbf{A_1}-\mathbf{B_1}-(1,0)$, while the inclusion of the triangle $\mathbf{C_1}-\mathbf{D_1}-(1,0)$ follows from the same half-plane argument as before. If $x_0>1$ and $x_1<1$, a calculation shows that in the non EB region the region on the right is convex, therefore it contains the triangle $\mathbf{C_1}-\mathbf{D_1}-(1,0)$, while the inclusion of the triangle $\mathbf{A_1}-\mathbf{C_1}-(1,0)$ follows from the same half-plane argument as before.}
	\label{figure9inclusion}
\end{figure}
Take hence $(x',M')\in  {\mathbb{L}}_{x,M}$: from~(\ref{alternativedefLO}) it follows that we can write
\begin{eqnarray} 
\Phi_{x',M'} = \Phi_{\bar{x}_1,\bar{M}_1} \circ  \Phi_{x,M} \circ \Phi_{\bar{x}_2,\bar{M}_2}\;,
\end{eqnarray} 
for some proper choice of $(\bar{x}_1,\bar{M}_1),  (\bar{x}_2,\bar{M}_2) \in \left(\mathbb{R}^{+} \right)^{2}$.
Setting then $\Phi_{x_2,M_2}  := \Phi_{x,M} \circ \Phi_{\bar{x}_2,\bar{M}_2}$, we can claim that $({x_2,M_2})$ is an element of ${\mathbb{L}}^{(0)}_{x,M}$ and  $({x',M'})$ an element of ${\mathbb{L}}^{(0)}_{x_2,M_2}$ (indeed
$\Phi_{x',M'} = \Phi_{\bar{x}_1,\bar{M}_1} \circ \Phi_{x_2,M_2}$).
 However, because of Eq.~(\ref{ide22}) the latter is a subset of ${\mathbb{L}}^{(0)}_{x,M}$, so  we can conclude that 
 \begin{eqnarray}
(x',M')\in  {\mathbb{L}}_{x,M} \Longrightarrow 
(x',M') \in {\mathbb{L}}^{(0)}_{x,M} \;,
\end{eqnarray} 
which together with Eq.~(\ref{impoinclusions}) gives the first of the  identities Eq.~(\ref{ide1}).

By the same token, let $(x',M')\in  {\mathbb{H}}_{x,M}$:  from~(\ref{alternativedefHIGH}) it follows that we can write
\begin{eqnarray} 
\Phi_{x,M} = \Phi_{\bar{x}_1,\bar{M}_1} \circ  \Phi_{x',M'} \circ \Phi_{\bar{x}_2,\bar{M}_2}\;,
\end{eqnarray} 
for some proper choice of $(\bar{x}_1,\bar{M}_1),  (\bar{x}_2,\bar{M}_2) \in \left(\mathbb{R}^{+} \right)^{2}$.
Setting then $\Phi_{x_2,M_2}  := \Phi_{x',M'} \circ \Phi_{\bar{x}_2,\bar{M}_2}$, we can claim that 
$({x',M'})$ is an element of ${\mathbb{H}}^{(0)}_{x_2,M_2}$, and $({{x}_2,{M}_2})$ an element of ${\mathbb{H}}^{(0)}_{x,M}$ 
(indeed $\Phi_{x,M} = \Phi_{\bar{x}_1,\bar{M}_1} \circ  \Phi_{x_2,M_2}$).
 However, because of Eq.~(\ref{ide22}) it follows that ${\mathbb{H}}^{(0)}_{x_2,M_2}$ is included into 
  ${\mathbb{H}}^{(0)}_{x,M}$, so that 
 \begin{eqnarray}
(x',M')\in  {\mathbb{H}}_{x,M} \Longrightarrow 
(x',M') \in {\mathbb{H}}^{(0)}_{x,M} \;,
\end{eqnarray} 
which gives the second of the identities Eq.~(\ref{ide1}).
\end{proof} 

 \subsection{Three-elements concatenation regions for channels $\Phi_{x,M}$ fulfilling~(\ref{EBconstraint})}~\label{three-timesEB} 
 Here we show that for channels $\Phi_{x,M}$ which are deep in the  EB region (i.e. such that (\ref{EBconstraint}) holds true), 
 the low-ground/high-ground regions are determined by Eqs.~(\ref{analytical2new}) and ~(\ref{analytical2new11}).
 To begin with let us observe that, for $M\geq M_{\text{EB}}(x)$, 
   Eqs.~(\ref{comp1})--(\ref{comp4}) lead to express the two-elements concatenation sets ${\mathbb{L}}^{(0)}_{x,M}$ and 
 ${\mathbb{H}}^{(0)}_{x,M}$ as
      \begin{eqnarray} \label{analytical10} 
  {\mathbb{L}}^{(0)}_{x,M} &=&  
  \Big\{ (x',M')\in\left(\mathbb{R}^{+} \right)^{2} :  
  \\ \nonumber && \quad  M'\geq \min\{  f^{(1)}_{x,M}(x'),  f^{(2)}_{x,M}(x')\} \Big\} \;,
  \\ \label{analytical20} 
  {\mathbb{H}}^{(0)}_{x,M} &=&  
  \Big\{ (x',M')\in\left(\mathbb{R}^{+} \right)^{2} :  
  \\ \nonumber && \quad  M'\leq \max\{  f^{(1)}_{x,M}(x'),  f^{(2)}_{x,M}(x')\} \Big\} \;,
  \end{eqnarray}
with $f^{(1,2)}_{x,M}(x')$ the functions defined in Eq.~(\ref{effe2}).  It turns out that such regions always admit
a non trivial overlap  ${\mathbb{O}}^{(0)}_{x,M} := {\mathbb{L}}^{(0)}_{x,M} \bigcap {\mathbb{H}}^{(0)}_{x,M}$ that includes a finite portion of the plane $\left(\mathbb{R}^{+} \right)^{2}$  in the neighbourhood of the origin  -- see Fig.~\ref{figure8inclusion1}.
\begin{figure}
	\includegraphics[width=\columnwidth]{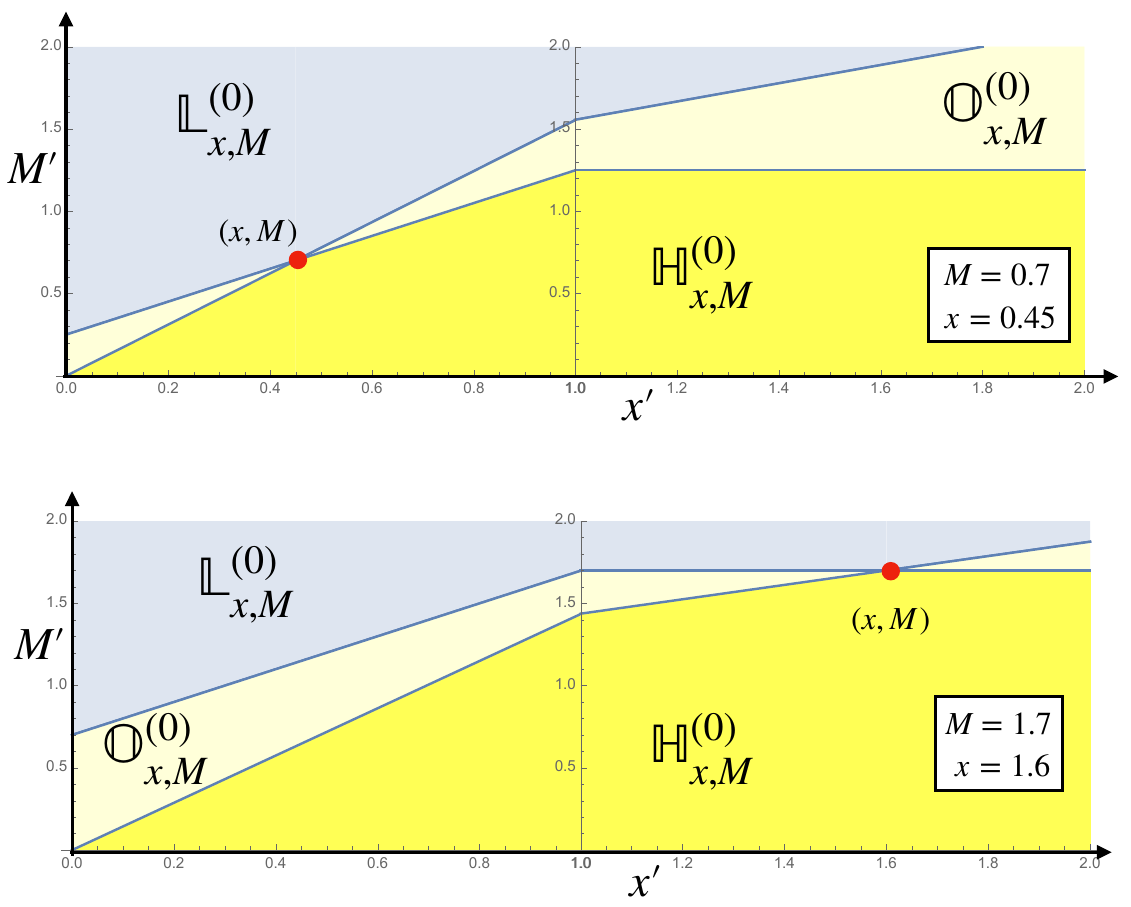}
	\caption{Examples of the two-elements compositions low-ground/high-ground regions
	${\mathbb{L}}^{(0)}_{x,M}$ and ${\mathbb{H}}^{(0)}_{x,M}$ for EB channels $\Phi_{x,M}$.
	Notice that such sets admits a non trivial overlap ${\mathbb{O}}^{(0)}_{x,M} := {\mathbb{L}}^{(0)}_{x,M} \bigcap {\mathbb{H}}^{(0)}_{x,M}$ (pale yellow region) which always includes the origin point $(0,0)$. 
	 }
	\label{figure8inclusion1}
\end{figure}
As a matter of fact  $(0,0)$ itself can be considered as element of ${\mathbb{O}}^{(0)}_{x,M}$  {(to be precise, it is an element of the closure of ${\mathbb{O}}^{(0)}_{x,M}$ -- see App.~\ref{app:formaL}
 for a refinement of the following argument, taking this into account)}, 
i.e. 
 \begin{equation} \label{notalimite} 
 M > M_{\text{EB}}(x)  \Longrightarrow (0,0) \in {\mathbb{O}}^{(0)}_{x,M} \subseteq {\mathbb{H}}^{(0)}_{x,M}\subseteq
 {\mathbb{H}}_{x,M}
 \;.
 \end{equation} 
 Now from~(\ref{trivial}) and (\ref{complementary}) we know that 
 \begin{eqnarray} \label{inclusion0} 
  (x',M') \in {\mathbb{H}}_{0,0} \;, \qquad \forall (x',M') \in \left(\mathbb{R}^{+} \right)^{2} \;, 
 \end{eqnarray} 
which using~(\ref{Hide11}) gives
\begin{equation} \label{ddfs} 
(x',M') \in {\mathbb{H}}_{x,M} \;, \qquad \forall (x',M') \in \left(\mathbb{R}^{+} \right)^{2} \;,
\end{equation} 
hence proving 
 Eq.~(\ref{analytical2new11})
Observe next that if $(x',M')$ is also EB, then from (\ref{ddfs})
we also have 
$(x,M)\in  {\mathbb{H}}_{x',M'}$ which lead to  Eq.~(\ref{analytical2new})  via the complementarity rule~(\ref{complementary}).

\section{Stabilizing  bounds} \label{sec:stab}

Building up from the low-ground/high-ground analysis presented in the previous sections we can now detail a general technique that 
allows one to potentially improve existing upper and lower bounds for the capacities of single-mode PI-GBCs. Indeed suppose that   ${\cal K}^{(+)}( {x,M})$, ${\cal K}^{(-)}( {x,M})$ are two functions  which bound for the capacity ${\cal K}({x,M}):= {\cal K}( \Phi_{x,M})$ of the channel $\Phi_{x,M}$, i.e.
\begin{equation} \label{supp} 
{\cal K}^{(+)}( {x,M}) \geq {\cal K}( {x,M}) \geq {\cal K}^{(+)}( {x,M}) \;, 
\end{equation} 
for all $(x,M) \in \left(\mathbb{R}^{+} \right)^{2}$.
From~(\ref{prop1}) and (\ref{prop2})  it follows that the quantities 
\begin{eqnarray}  \label{improatt} 
\begin{cases}\label{improvboun}
\underline{\cal K}^{(+)}({x,M}) &:= \min_{({x',M'}) \in  {\mathbb{H}}_{x,M} }
 {\cal K}^{(+)}({x',M'})   \;,\\\\ 
 \overline{\cal K}^{(-)}({x,M}) &:= \max_{({x',M'}) \in  {\mathbb{L}}_{x,M} } 
 {\cal K}^{(-)}({x',M'})\;, \end{cases} 
\end{eqnarray} 
can  potentially improve the inequalities~(\ref{supp}), i.e. 
\begin{eqnarray} \label{improvebound} 
\begin{cases}
&{{\cal K}}^{(+)}( {x,M}) \geq \underline{\cal K}^{(+)}({x,M}) \;,
\\  \\
  &\underline{\cal K}^{(+)}( {x,M}) \geq {\cal K}( {x,M}) \geq \overline{{\cal K}}^{(-)}
( {x,M})\;,
 \\ \\
&\overline{{\cal K}}^{(-)}( {x,M})\geq {{\cal K}}^{(-)}( {x,M})\;. 
\end{cases} 
\end{eqnarray} 
Of course it is very possible that  the functions $\underline{\cal K}^{(+)}({x,M})$ and $\overline{{\cal K}}^{(-)}( {x,M})$
will coincides with ${\cal K}^{(+)}({x,M})$ and ${{\cal K}}^{(-)}( {x,M})$, respectively: this happens for instance for all bounding functions~(\ref{supp}) which arise from
 operational procedures that automatically incorporate 
 data processing, e.g. the bounds~(\ref{UPPLOB1}), (\ref{lower1}), and 
 (\ref{lower2}). 
There are however  examples where the construction  
(\ref{improvboun}) leads to non trivial overall improvements. 
In what follow we shall detail one of such cases.

\subsection{Improving the upper bounds for the quantum capacity of thermal attenuators} 

Expressing the functions~(\ref{DEFFKBatt}) in terms of the parametrization~(\ref{defgenerale}) 
we can claim that 
 the quantum and private capacities of the channel $\Phi_{x,M}$ is always smaller than or equal to 
 \begin{equation} \label{new00} 
Q_\mathrm{FKG}(x,M) :=\left\{ \begin{array}{ll}
 Q^{\mathrm{att}}_\mathrm{FKG}(x,\tfrac{M}{1-x})\;, & \forall x\in [0,1]\;, \\ \\ 
Q^{\mathrm{amp}}_\mathrm{FKG}(M)\;, & \forall x\geq 1 \;. 
 \end{array} \right.
\end{equation}
Following  Eq.~(\ref{improvboun}) we can produce {an upper bound} $\underline{Q}_\mathrm{FKG}(x,M)$ by
taking the minimum of $\underline{Q}_\mathrm{FKG}(x',M')$ over the set ${\mathbb{H}}_{x,M}$. This strategy had been suggested and shown to give improvements in~\cite{flagged channel 3}, and it will be fully explored here. Without loss of generality we 
assume $(x,M)$ not to belong to the AD domain~${\mathbb{A}}$, i.e. 
\begin{eqnarray} \label{conc} 
\begin{cases} 
x\geq 1/2, \\
 M\leq M_{\text{AD}}(x)=  \min\{ x-1/2,1/2\}\;,
\end{cases} 
\end{eqnarray} 
a condition which via   Eq.~(\ref{analytical2}), allows us to identify 
 the high-ground set of the model with the
yellow regions of~Fig.~\ref{figurnew3bis}.
We next 
  compute the value $\underline{Q}^{(1)}_\mathrm{FKG}(x,M)$ which represents the minimum of ${Q}_\mathrm{FKG}(x',M')$ 
 for points of ${\mathbb{H}}_{x,M}$ which have $x'\in [0,1]$, and the value $\underline{Q}^{(2)}_\mathrm{FKG}(x,M)$
which instead involves points of ${\mathbb{H}}_{x,M}$ with { $x\geq1$}, i.e. \begin{eqnarray} \nonumber \label{bounds with opt}
\underline{Q}^{(1)}_\mathrm{FKG}(x,M)&:=& \min_{(x',M')\in {\mathbb{H}}_{x,M}; x'\in[0, 1]}
Q^{\mathrm{att}}_\mathrm{FKG}(x',\tfrac{M'}{1-x'})\;, \\
\underline{Q}^{(2)}_\mathrm{FKG}(x,M)&:=& \min_{(x',M')\in {\mathbb{H}}_{x,M}; x'\geq 1}
Q^{\mathrm{amp}}_\mathrm{FKG}(M')\;.
\end{eqnarray} 
Once we have these terms we can then write the global minimum of
 ${Q}_\mathrm{FKG}(x,M)$ over ${\mathbb{H}}_{x,M}$ as
\begin{equation} \label{asdf111} 
\underline{Q}_\mathrm{FKG}(x,M) = \min\{ \underline{Q}^{(1)}_\mathrm{FKG}(x,M),\underline{Q}^{(2)}_\mathrm{FKG}(x,M)\} \;. 
\end{equation}
Consider first the evaluation of $\underline{Q}^{(2)}_\mathrm{FKG}(x,M)$. Let us start observing that from Eqs.~(\ref{comp3}) and (\ref{comp4}), the maximum value  of $M'$ we can get  
for points $(x',M')$ of  ${\mathbb{H}}_{x,M}$ with $x'\geq 1$ is 
\begin{equation} \label{new3} 
M^{(>)}_{\max}(x) : =\left\{ \begin{array}{ll} 
M/x\;, & \forall x\in [\tfrac{1}{2},1]\\
M \;, & \forall x\geq 1 \end{array} \right. = \frac{M}{M_{\text{EB}}(x)}\;, 
\end{equation}
which, in virtue of Eq.~(\ref{conc}) is always smaller than or equal to $1/2$.
Recalling hence that on the interval $\kappa\in [0,1/2]$ the function
 $Q^{\mathrm{amp}}_\mathrm{FKG}(\kappa)$  is monotonically decreasing,
we can write 
\begin{eqnarray}\label{new2} 
&&\underline{Q}^{(2)}_\mathrm{FKG}(x,M) = Q^{\mathrm{amp}}_\mathrm{FKG}(M^{(>)}_{\max}(x))
 \\&&=-\log_2( \tfrac{eM}{M_{\text{EB}}(x)})+2h \left(\tfrac{\sqrt{ M^2+M^2_{\text{EB}}(x)}-M^2_{\text{EB}}(x)}{2 M_{\text{EB}}(x)}\right).\nonumber
 \end{eqnarray} 
 In the case of amplifiers (i.e. for $x \geq 1$) this implies that  $\underline{Q}^{(2)}_\mathrm{FKG}(x,M)$ always coincides with the old bound (\ref{new00}), so
 no improvement can be obtained. 
 \\
 
Consider next $\underline{Q}^{(1)}_\mathrm{FKG}(x,M)$. Here the key observation is  that 
for fixed value of $x'$, $Q^{\mathrm{att}}_\mathrm{FKG}(x',\tfrac{M'}{1-x'})$ is a decreasing function of $M'$.
Observe also that for $x\in[\tfrac{1}{2}, 1]$, 
Eqs.~(\ref{comp1}) and (\ref{comp2}) implies that  the maximum value  of $M'$ we can get  
for points $(x',M')$ of  ${\mathbb{H}}_{x,M}$ which have $x'\leq 1$ is
\begin{equation} \label{defm<} 
M^{(<)}_{\max}(x',x) : =\left\{ \begin{array}{ll} 
M+x'-x\;, & \forall x'\in [x-M,x],\\
\frac{x'}{x}M\;, & \forall x'\in [x,1].
\end{array} 
\right.
\end{equation} 
Therefore we can write
\begin{eqnarray}  
\underline{Q}^{(1)}_\mathrm{FKG}(x,M)&=& \min_{x'\in[x-M, 1]}
Q^{\mathrm{att}}_\mathrm{FKG}(x',\tfrac{M^{(<)}_{\max}(x',x)}{1-x'}) \\ &=&\nonumber
\min\{  \underline{Q}^{(1.1)}_\mathrm{FKG}(x,M), \underline{Q}^{(1.2)}_\mathrm{FKG}(x,M)\}\;,  
\end{eqnarray} 
with 
\begin{eqnarray} \underline{Q}^{(1.1)}_\mathrm{FKG}(x,M)&:=&
\min_{x'\in[x-M, x]}\label{new3eee}
Q^{\mathrm{att}}_\mathrm{FKG}(x',\tfrac{M+x'-x}{1-x'}) \\ \nonumber 
&=& \min_{\epsilon\in[0, 1]} Q^{\mathrm{att}}_\mathrm{FKG}(x-\epsilon M,\tfrac{(1-\epsilon) M}{1-x + \epsilon M}) \;, 
\end{eqnarray} 
where the second identify simply follows from a proper 
parametrization of  $x'$, and 
\begin{eqnarray} \label{new4} 
\underline{Q}^{(1.2)}_\mathrm{FKG}(x,M)
&:=& \min_{x'\in[x, 1]}
Q^{\mathrm{att}}_\mathrm{FKG}(x',\tfrac{x' M}{(1-x')x}) \\
&=& \min_{\epsilon\in[0, 1]}Q^{\mathrm{att}}_\mathrm{FKG}\left(1+ \epsilon(1-x),\tfrac{1+\epsilon (1-x) M}{(1-x)x\epsilon }\right)\;. \nonumber 
\end{eqnarray} 
Notice that for $\epsilon =0$ the function 
$Q^{\mathrm{att}}_\mathrm{FKG}(x-\epsilon M,\tfrac{(1-\epsilon) M}{1-x + \epsilon M})$  corresponds to $Q^{\mathrm{att}}_\mathrm{FKG}(x,M)$ in Eq.~(\ref{DEFFKBatt}) while  for $\epsilon=1$ we recover the bound $Q(\mathcal{E}_{x-M,0})$ of Eq.~(\ref{twist}). 
Therefore for the attenuators we have that $\underline{Q}^{(1.1)}_\mathrm{FKG}(x,M)$ (and hence $\underline{Q}^{(1)}_\mathrm{FKG}(x,M)$) 
is always guaranteed to provide bounds which are at least equal than those reported in Eqs.~(\ref{DEFFKBatt}) and~(\ref{twist}), i.e. 
\begin{equation}
    \underline{Q}^{(1)}_\mathrm{FKG}(x,M)\leq \min\{Q(\mathcal{E}_{x-M,0}),Q^{\mathrm{att}}_\mathrm{FKG}(x,M)\}\;.
\end{equation}
On the contrary for $x\geq 1$,
 Eq.~(\ref{defm<}) gets replaced 
\begin{equation} 
M^{(<)}_{\max}(x',x) : =
M +x'-1\;,  \forall x'\in [1-M,1],
\end{equation}
leading to 
\begin{eqnarray} \underline{Q}^{(1)}_\mathrm{FKG}(x,M)&:=&
\min_{x'\in[1-M, 1]}
Q^{\mathrm{att}}_\mathrm{FKG}(x',\tfrac{M+x'-1}{1-x'})\nonumber \\
&=& \min_{\epsilon\in[0, 1]}
Q^{\mathrm{att}}_\mathrm{FKG}\left(1-\epsilon M,\tfrac{1-\epsilon}{\epsilon}\right)\;. 
\end{eqnarray} 

\subsubsection{Numerical analysis}

Numerical analysis shows that $\underline{Q}^{(1.2)}_\mathrm{FKG}(x,M)$ is always less performant than $\underline{Q}^{(1.1)}_\mathrm{FKG}(x,M)$ so we can drop it form  Eq.~(\ref{asdf111}). 
Accordingly we can claim that for an attenuator channel ($x\leq 1$) the quantum capacity $Q(\Phi_{x,M})$ must fulfil two new sets of  inequalities, i.e.  
\begin{eqnarray} \label{newBound2ee} 
Q(\Phi_{x,M}) \leq  \underline{Q}^{(2)}_\mathrm{FKG}(x,M) = 
Q^{\mathrm{amp}}_\mathrm{FKG}(M/x) \;,
\end{eqnarray} 
which follows from (\ref{new2}) and (\ref{new3}),
and 
\begin{equation} 
\label{new1bounds} 
Q(\Phi_{x,M}) \leq  \underline{Q}^{(1)}_\mathrm{FKG}(x,M)= \min_{\epsilon \in [0,1]} 
 Q^{\mathrm{att}}_\mathrm{FKG}(x-\epsilon M,\tfrac{(1-\epsilon) M}{1-x + \epsilon M})\;, 
\end{equation}
which instead follows from (\ref{new3eee}).
As shown in   Fig.~\ref{fig:m=0.15}  for some value of the channel parameter these two functions provide better upper bounds that those reported in
 Sec.~\ref{sec:up}. \begin{figure}
	\centering
	\includegraphics[width=0.99\columnwidth]{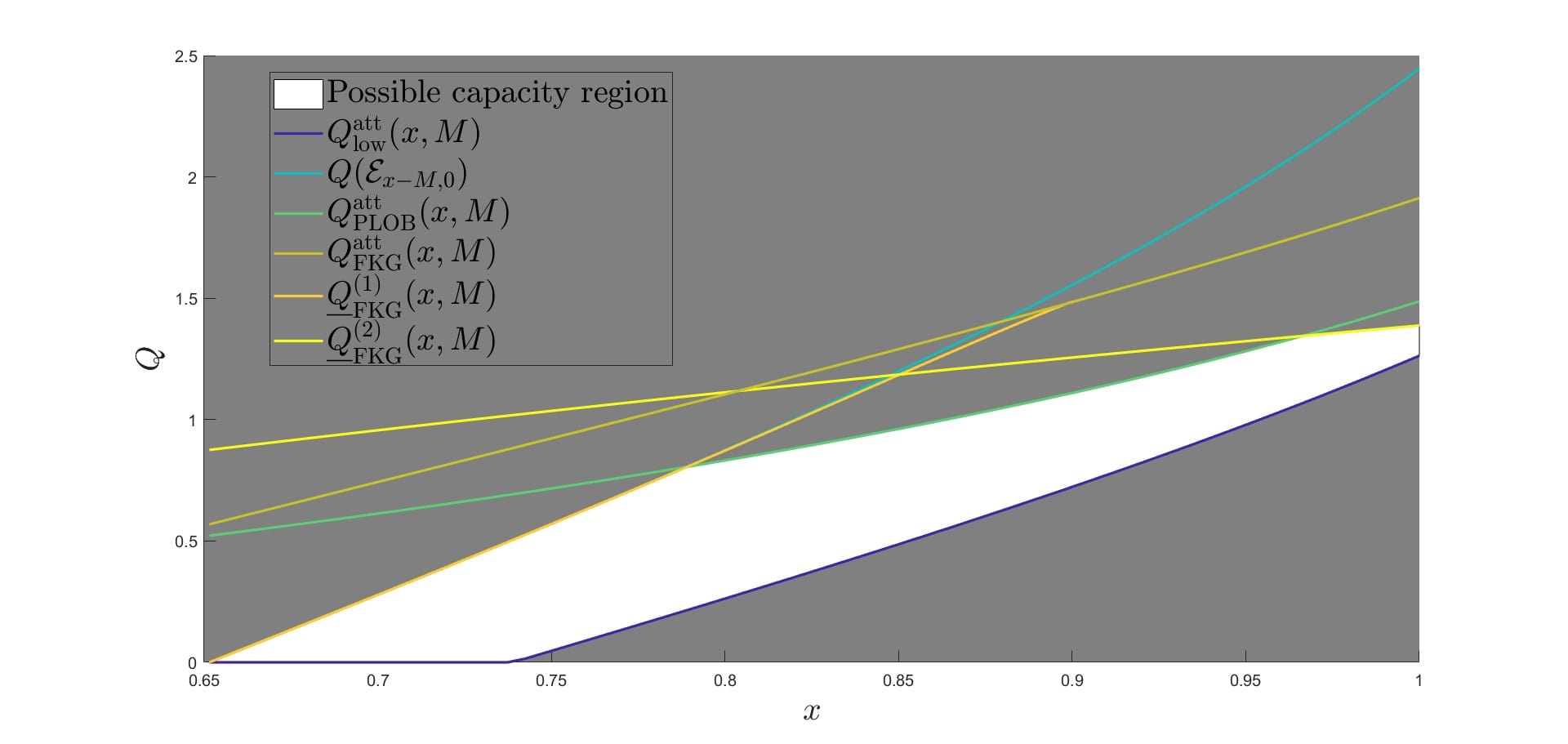}
	\caption{Comparison between state of the art upper bounds on the quantum capacity $Q(\Phi_{x,M})$ of thermal attenuator ($x\leq 1$) for $M=0.15$. Each color represents an uper  bound on the quantum capacity of thermal attenuator. For $x<0.78$, the function $\underline{Q}^{(1)}_\mathrm{FKG}(x,M)$ in Eq.~(\ref{new1bounds}) is the best upper bound (note that in this region $Q(\mathcal{E}_{\eta-N(1-\eta),0})$ and $\underline{Q}^{(1)}_\mathrm{FKG}(x,M)$ are very close, by numerical evidence). For $x>0.97$, the function $\underline{Q}^{(2)}_\mathrm{FKG}(x,M)$ in Eq.~(\ref{newBound2ee}) outperforms the other bounds, and the for intermediate values of $x$ i.e.  $0.78<x<0.97$, bound $Q^\mathrm{att}_\mathrm{PLOB}(x,M)$ in Eq.~(\ref{UPPLOB1}) wins. The purple line represents the lower bound $Q^\mathrm{att}_\mathrm{low}(x,M)$ in Eq.~(\ref{lower2}), and the white region indicates the possible values of quantum capacity of thermal attenuator.}
	\label{fig:m=0.15}
\end{figure}

For amplifiers (i.e. $x\geq 1$) we have already observed that  $\underline{Q}^{(2)}_\mathrm{FKG}(x,M)$ always coincides with the old bound (\ref{new00}).
Accordingly new results can only derive from  $\underline{Q}^{(1)}_\mathrm{FKG}(x,M)$ (i.e. from $\underline{Q}^{(1.1)}_\mathrm{FKG}(x,M)$): unfortunately 
numerical study reveals that such function is always less performant than the bounds of Sec.~\ref{sec:up}. 
A comprehensive comparison between the old bounds and the improved versions  derived in this section is presented in the right part of Fig.~\ref{fig:regions}.

\section{Conclusion}\label{conc}
 In the present manuscript, we sought to better understand the behavior of the capacities of single-mode phase-insensitive Gaussian bosonic channels (PI-GBCs) and their concatenation rules. Through extensive analysis, we were able to establish{, for each point in the parameter space of PI-GBCs, an analytical characterization of two regions in the parameter space, of higher and lower capacity.  That is,} the capacity of points within each region was found to be {respectively} higher or lower than that of the original channel. Using these regions, we were able to improve upon the previous upper bounds. This structure of parameter phase space of PI-GBCs can be used to potentially improve any new upper or lower bound.

We thank F. A. Mele and L. Lami for comments and suggestions. We also 
acknowledge financial support by MUR (Ministero dell’ Universit\`a e della Ricerca) through the PNRR MUR project PE0000023-NQSTI. MF is supported by Juan de la Cierva - Formación (Spanish MICIN project FJC2021-047404-I), with funding from MCIN/AEI/10.13039/501100011033 and European Union “NextGenerationEU”/PRTR.

	\let\oldaddcontentsline\addcontentsline
\renewcommand{\addcontentsline}[3]{}

\pagebreak
\newpage

\widetext
\appendix

\section{Monotonicity along the borders}  \label{sec:mono} 
Notice  that as the property~(\ref{primaprima}) imposes in particular that ${\cal K}(\mathcal{E}_{\eta,N})\geq {\cal K}(\mathcal{E}_{\eta',N'})$ 
for all  $\eta'\leq \eta$, $N'\geq N$, it follows that 
 ${\cal K}(\mathcal{E}_{\eta,N})$ 
must be non-decreasing  in $\eta$ for fixed $N$, and non-increasing w.r.t. $N$ for fixed $\eta$.
Similarly, from~(\ref{primaprimap4imq}) it follows that 
 that 
${\cal K}({\cal A}_{g,N})$ 
must be non-increasing  in $g$ for fixed $N$, and non-increasing w.r.t. $N$ for fixed $g$.

 More generally we can establish the following monotonicity rules for points that lies along the borders
 of the low-ground/high-ground regions, i.e. 
 
\begin{corollary}\label{monotonicity}
The capacity ${\cal K}$ of the channel $\mathcal{E}_{\eta',N'}$ as a function of $\eta'$ is monotonically non-decreasing when evaluated along points of the curve 
$N'={N}_{\eta,N}^{{(1)}}(\eta')$, and monotonically non-increasing when 
evaluated along points of the curve 
$N'={N}_{\eta,N}^{{(2)}}(\eta')$.  Analogously  the capacity 
${\cal K}$ of the channel ${\cal A}_{g',N'}$ as a function of $g'$ is monotonically non-increasing when evaluated along points of the curve 
$N'={N}_{g,N}^{{(1)}}(g')$, and monotonically non-decreasing when 
evaluated along points of the curve 
$N'={N}_{g,N}^{{(2)}}(g')$.
\end{corollary}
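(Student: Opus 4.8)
The plan is to recognize each of the four border curves as the orbit of the reference channel under a one‑parameter family of \emph{noiseless} (zero‑temperature) attenuators or amplifiers, and then to exploit the fact that such noiseless maps close a semigroup under concatenation. Concretely, re‑inspecting the proofs of Properties~\ref{th3} and~\ref{th33} one sees that the inequalities appearing there are saturated precisely when the auxiliary map carries no thermal photons: the curve $N'={N}_{\eta,N}^{{(1)}}(\eta')$ is $\{\mathcal{E}_{\eta_2,0}\circ\mathcal{E}_{\eta,N}:\eta_2\in[0,1]\}$ (rule $\mathbf{(C_1)}$ with $N_2=0$), the curve $N'={N}_{\eta,N}^{(2)}(\eta')$ is $\{\mathcal{E}_{\eta,N}\circ\mathcal{A}_{g_1,0}:g_1\geq 1\}$ (rule $\mathbf{(C_{3.1})}$ with $N_1=0$), the curve $N'={N}_{g,N}^{{(1)}}(g')$ is $\{\mathcal{A}_{g,N}\circ\mathcal{A}_{g_1,0}:g_1\geq 1\}$ (rule $\mathbf{(C_2)}$ with $N_1=0$), and the curve $N'={N}_{g,N}^{(2)}(g')$ is $\{\mathcal{E}_{\eta_2,0}\circ\mathcal{A}_{g,N}:\eta_2\in[0,1]\}$ (rule $\mathbf{(C_{3.2})}$ with $N_2=0$).

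Next I would record the two elementary semigroup identities that drop out of~(\ref{defNnew}), namely $\mathcal{E}_{\eta_2,0}\circ\mathcal{E}_{\eta_1,0}=\mathcal{E}_{\eta_1\eta_2,0}$ and $\mathcal{A}_{g_2,0}\circ\mathcal{A}_{g_1,0}=\mathcal{A}_{g_1 g_2,0}$. Combining these with the parametrizations above, moving along any of the four curves in a definite direction amounts to attaching one more noiseless map on the outer side. For instance, given two points $(\eta'_a,{N}_{\eta,N}^{{(1)}}(\eta'_a))$ and $(\eta'_b,{N}_{\eta,N}^{{(1)}}(\eta'_b))$ with $\eta'_a<\eta'_b$, one has $\mathcal{E}_{\eta'_a,{N}_{\eta,N}^{{(1)}}(\eta'_a)}=\mathcal{E}_{\eta'_a/\eta'_b,0}\circ\mathcal{E}_{\eta'_b,{N}_{\eta,N}^{{(1)}}(\eta'_b)}$, so by the data‑processing inequality~(\ref{data}) the capacity at the $\eta'_a$ point is not larger than the one at the $\eta'_b$ point, i.e.\ ${\cal K}$ is non‑decreasing in $\eta'$ along that curve. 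The remaining three statements follow in exactly the same way: increasing $\eta'$ along ${N}_{\eta,N}^{(2)}$ appends a noiseless amplifier on the right, increasing $g'$ along ${N}_{g,N}^{{(1)}}$ appends a noiseless amplifier on the right, and decreasing $g'$ along ${N}_{g,N}^{(2)}$ prepends a noiseless attenuator on the left; in each case~(\ref{data}) gives precisely the stated monotonicity direction (the same one indicated by the arrows in Fig.~\ref{figurnew3}), and the corresponding statement in the $\Phi_{x,M}$ parametrization is recovered through the substitutions of Table~\ref{tab2}.

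The only delicate point — and the one I expect to be the main, if minor, obstacle — is the bookkeeping of the side on which the extra noiseless map is attached: when a pure‑loss channel is composed with a thermal amplifier the order of composition affects how the thermal noise is redistributed, so one must verify that appending the noiseless map on the \emph{outer} side (the same side it occupies in the decomposition that defines the curve) indeed lands back on the curve and not off it. Concretely this means re‑running the relevant line of $\mathbf{(C_1)}$, $\mathbf{(C_2)}$, $\mathbf{(C_{3.1})}$ or $\mathbf{(C_{3.2})}$ with $N_1$ or $N_2$ set to zero and checking that the resulting second noise parameter equals ${N}_{\eta,N}^{(j)}(\cdot)$ or ${N}_{g,N}^{(j)}(\cdot)$ — a short algebraic check in each of the four cases, essentially already carried out in the saturation computations of Properties~\ref{th3} and~\ref{th33}.
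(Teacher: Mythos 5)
Your proposal is correct and follows essentially the same route as the paper: the paper verifies algebraically that ${N}^{(j)}_{\eta'_2,N'_2}(\eta'_1)=N'_1$ (i.e.\ that the border curve restarted from any of its points reproduces itself), which is exactly the statement that neighbouring points on the curve are related by composition with a noiseless ($N=0$) attenuator or amplifier, and then invokes the low-ground inclusion of Property~\ref{th3}. Your explicit semigroup formulation $\mathcal{E}_{\eta'_a,N'_a}=\mathcal{E}_{\eta'_a/\eta'_b,0}\circ\mathcal{E}_{\eta'_b,N'_b}$ together with the data-processing inequality~(\ref{data}) is just the same computation made channel-theoretically explicit, and your "delicate point" about which side the noiseless map attaches to is resolved by exactly the saturation checks already present in the proofs of Properties~\ref{th3} and~\ref{th33}.
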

(It goes without mentioning that, in the case of 
${N}_{\eta,N}^{{(2)}}(\eta')$ and 
${N}_{g,N}^{{(2)}}(g')$ the above properties apply only for those
points where  these functions are explicitly non negative). 
\begin{proof}We report here only the proof for the attenuator, as the one for the amplifiers can be obtained by following the same steps. 
Let us start addressing the monotonicity along the curve $N'={N}_{\eta,N}^{{(1)}}(\eta')$.
If $\eta'_1\leq \eta'_2$  and $N'_1={N}_{\eta,N}^{{(1)}}(\eta'_1),\;N'_2={N}_{\eta,N}^{{(1)}}(\eta'_2)$, we can write 
\begin{eqnarray}
	N^{(1)}_{\eta'_2,N'_2}(\eta'_1)&=&(\tfrac{1-\eta'_2}{\eta'_2})N'_2(\tfrac{\eta'_1}{1-\eta'_1})=(\tfrac{1-\eta}{\eta})N(\tfrac{\eta'_1}{1-\eta'_1})\nonumber\\&=&{N}_{\eta,N}^{{(1)}}(\eta'_1)=N'_1\; .
\end{eqnarray}
Therefore we can conclude that $(\eta'_1,N'_1)\in  {\mathbb{L}}^{({\rm att})}_{\eta'_2,N'_2}$ which implies~${\cal K}( \mathcal{E}_{\eta'_1,N'_1}) \leq {\cal K}( \mathcal{E}_{\eta'_2,N'_2})$  as a direct consequence of (\ref{IMPO12}) of Property~\ref{th3}. 
Consider next the  monotonicity along the curve $N'={N}_{\eta,N}^{{(2)}}(\eta')$ for points $\eta'$ for which ${N}_{\eta,N}^{{(2)}}(\eta')$ is strictly positive. 
Given then  $\eta'_1\leq \eta'_2$ values that fulfils such constraint  can write 
\begin{eqnarray}
	N^{(2)}_{\eta'_1,N'_1}(\eta'_2)&=& (N'_1+1)\left(\tfrac{1-\eta'_1}{1-\eta'_2}\right)-1 \nonumber
	\\&=& (N+1)\left(\tfrac{1-\eta}{1-\eta'_2}\right)-1 \nonumber \\
	&=& {N}_{\eta,N}^{{(2)}}(\eta'_2)=N'_2\; . \label{derivazione1} 
\end{eqnarray}
 The identity~(\ref{derivazione1}) implies that 
 $(\eta'_2,N'_2)\in  {\mathbb{L}}^{(\rm att)}_{\eta'_1,N'_1}$ that finally 
leads to 
${\cal K}( \mathcal{E}_{\eta'_1,N'_1}) \geq {\cal K}( \mathcal{E}_{\eta'_2,N'_2})$.
\end{proof}
The above results also allow us to establish monotonicity rules for  the curves~${N}_{\eta,N}^{(3,4)}(g)$ and ${N}_{g,N}^{(3,4)}(\eta)$:
\begin{corollary}\label{monotonicitynew1}
The capacity ${\cal K}$ of the channel $\mathcal{E}_{\eta',N'}$ as a function of $\eta'$ is monotonically non-decreasing when evaluated along points of the curve $N'={N}_{g,N}^{(4)}(\eta')$, and monotonically non-increasing when 
evaluated along points of the curve 
 $N'={N}_{g,N}^{(3)}(\eta')$. Analogously  the capacity 
${\cal K}$ of the channel ${\cal A}_{g',N'}$ as a function of $g'$ is monotonically non-increasing when evaluated along points of the curve 
$N'={N}_{\eta,N}^{(4)}(g')$, and monotonically non-decreasing when 
evaluated along points of the curve 
$N'={N}_{\eta,N}^{(3)}(g')$.\end{corollary}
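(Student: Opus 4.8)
The plan is to deduce Corollary~\ref{monotonicitynew1} from Corollary~\ref{monotonicity} by recognising that each of the four curves $N^{(3,4)}_{\eta,N}(g)$ and $N^{(3,4)}_{g,N}(\eta)$ is, as a subset of the $(\eta',N')$- or $(g',N')$-plane, a level set of exactly one of the invariants whose level sets are the curves $N^{(1,2)}_{\eta,N}(\eta')$ and $N^{(1,2)}_{g,N}(g')$ already handled there, and then simply transporting the monotonicity statement.

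First I would record the elementary identifications. For attenuators, the curve $N'=N^{(1)}_{\eta,N}(\eta')$ is the level set of $\nu_1(\eta',N'):=N'\tfrac{1-\eta'}{\eta'}$ at value $N\tfrac{1-\eta}{\eta}$, while $N'=N^{(2)}_{\eta,N}(\eta')$ is the level set of $\nu_2(\eta',N'):=(N'+1)(1-\eta')$ at value $(N+1)(1-\eta)$; and, as $(\eta,N)$ ranges over its domain, these curves exhaust \emph{all} level sets of $\nu_1$ and of $\nu_2$. For amplifiers, analogously, $N'=N^{(1)}_{g,N}(g')$ is a level set of $\mu_1(g',N'):=N'(g'-1)$ and $N'=N^{(2)}_{g,N}(g')$ a level set of $\mu_2(g',N'):=(N'+1)\tfrac{g'-1}{g'}$. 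Corollary~\ref{monotonicity}, read on these level sets, says exactly: $\mathcal{K}(\mathcal{E}_{\eta',N'})$ is non-decreasing in $\eta'$ along $\nu_1$-level sets and non-increasing along $\nu_2$-level sets, while $\mathcal{K}(\mathcal{A}_{g',N'})$ is non-increasing in $g'$ along $\mu_1$-level sets and non-decreasing along $\mu_2$-level sets.

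Next I would check, by a one-line rearrangement of the definitions~(\ref{defn3g})--(\ref{defn3gw}) and~(\ref{ddf})--(\ref{ddf1}), that along $N'=N^{(4)}_{g,N}(\eta')$ one has $\nu_1(\eta',N')=(N+1)\tfrac{g-1}{g}$ constant; along $N'=N^{(3)}_{g,N}(\eta')$ one has $\nu_2(\eta',N')=N(g-1)$ constant; along $N'=N^{(3)}_{\eta,N}(g')$ one has $\mu_2(g',N')=N\tfrac{1-\eta}{\eta}$ constant; and along $N'=N^{(4)}_{\eta,N}(g')$ one has $\mu_1(g',N')=(N+1)(1-\eta)$ constant. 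Hence each of these four curves coincides point-wise with a member of the families $N^{(1,2)}$ for a suitable index, and Corollary~\ref{monotonicity} applies verbatim: $\mathcal{K}(\mathcal{E}_{\eta',N'})$ is non-decreasing in $\eta'$ along $N'=N^{(4)}_{g,N}(\eta')$ and non-increasing along $N'=N^{(3)}_{g,N}(\eta')$, while $\mathcal{K}(\mathcal{A}_{g',N'})$ is non-increasing in $g'$ along $N'=N^{(4)}_{\eta,N}(g')$ and non-decreasing along $N'=N^{(3)}_{\eta,N}(g')$, which are precisely the four assertions. As in Corollary~\ref{monotonicity}, the statements concerning $N^{(3)}_{g,N}$ and $N^{(3)}_{\eta,N}$ are understood only on the portion of the curve where $N'\ge 0$.

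Should the level-set bookkeeping be judged too compressed, I would fall back on the direct argument used for Corollary~\ref{monotonicity}: given two points $P_1,P_2$ on, say, $N'=N^{(4)}_{g,N}(\eta')$ with $\eta'_1\le\eta'_2$, a short computation gives $N^{(1)}_{\eta'_2,N'_2}(\eta'_1)=N'_1$, so $P_1\in{\mathbb{L}}^{({\rm att})}_{\eta'_2,N'_2}\subseteq{\mathbb{L}}^{(0)}(\mathcal{E}_{\eta'_2,N'_2})$ by Property~\ref{th3}, whence $\mathcal{K}(\mathcal{E}_{\eta'_1,N'_1})\le\mathcal{K}(\mathcal{E}_{\eta'_2,N'_2})$ by the data-processing inequality~(\ref{data}); the remaining three curves are handled identically, using $N^{(1)}$ or $N^{(2)}$ (and, for amplifiers, Property~\ref{th33}) according to the sign of $\eta'_1-\eta'_2$ (resp.\ $g'_1-g'_2$) and to whether the curve bounds a low- or a high-ground region. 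I do not anticipate a genuine obstacle; the only delicate point is pairing the correct invariant and the correct sign of the capacity variation with each curve, since a curve that looks like an ``$N^{(1)}$-type'' level set in the attenuator (resp.\ amplifier) chart arises here as the boundary of the low- or high-ground region of an object of the \emph{opposite} type, so the direction of monotonicity must be carried through that identification.
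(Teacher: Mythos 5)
Your proposal is correct and follows essentially the same route as the paper: the paper's proof likewise identifies each curve $N^{(3,4)}_{g,N}(\eta')$, $N^{(3,4)}_{\eta,N}(g')$ with a member of the families $N^{(1,2)}_{\eta,N}(\eta')$, $N^{(1,2)}_{g,N}(g')$ (checking this explicitly only for $N^{(4)}_{g,N}$, exactly as in your level-set computation) and then invokes Corollary~\ref{monotonicity}. Your systematic bookkeeping via the invariants $\nu_{1,2}$ and $\mu_{1,2}$, with all four identifications written out and the signs of monotonicity matched correctly, is just a slightly more explicit presentation of the same argument.
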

\begin{proof}
The derivation relies on the fact that we can map the curves ${N}_{g,N}^{(3,4)}(\eta')$ and ${N}_{\eta,N}^{(3,4)}(g')$ into 
  ${N}_{\eta,N}^{(1,2)}(\eta')$ and ${N}_{g,N}^{(1,2)}(g')$ respectively. We give here direct proof of this fact  only for 
  ${N}_{\eta,N}^{(4)}(\eta')$: the generalization to the other cases follows trivially. 
 Consider a point $(\eta_1,N_1)$ on ${N}_{g,N}^{(4)}(\eta')$: inverting the identity 
  $N_1= {N}_{g,N}^{(4)}(\eta_1)$ can write 
  \begin{eqnarray} (N+1)
		\left(\tfrac{g-1}{g}\right)= N_1 \left(\tfrac{1-
		\eta_1}{\eta_1}\right)\;,
		\end{eqnarray} 
		which gives
  \begin{eqnarray} {N}_{g,N}^{(4)}(\eta') &=&  (N+1)
		\left(\tfrac{g-1}{g}\right)\left(\tfrac{
		\eta'}{1-\eta'}\right) \nonumber \\
		&=& N_1 \left(\tfrac{1-
		\eta_1}{\eta_1}\right)\left(\tfrac{
		\eta'}{1-\eta'}\right)  =  {N}_{\eta_1,N_1}^{(1)}(\eta')\;. 
		\end{eqnarray} 	
		Invoking hence Corollary~\ref{monotonicity} we can now claim that ${\cal K}(\mathcal{E}_{\eta',N'})$ will be
		non-decreasing w.r.t. to $\eta'$, hence proving the thesis. 
\end{proof}

\section{Formal derivation of Eq.~(\ref{ddfs})} \label{app:formaL} 
The derivation of  Eq.~(\ref{ddfs}) presented in the main text suffers from
a problem related with the fact that, technically speaking,  Eq.~(\ref{notalimite}) is true only in an approximate sense.
Indeed from~(\ref{defNnew}) it is clear that any composition involving the channel $\Phi_{0,0}$ is bound to produce only 
elements with $x=0$, i.e. 
\begin{eqnarray} 
\Phi_{\bar{x}_2,\bar{M}_2} \circ \Phi_{0,0} \circ \Phi_{\bar{x}_1,\bar{M}_1} = 
 \Phi_{0,\bar{M}_2+\Theta(\bar{x}_2-1)}\;. 
\end{eqnarray} 
Therefore as soon as $x>0$ there is no hope to see $(0,0)$ as an element of its high-ground set  ${\mathbb{H}}_{x,M}$.
The way one should interpret  Eq.~(\ref{notalimite}) is that for all   $(x,M)$ such that (\ref{EBconstraint}) 
holds true then we can a set of points arbitrarily close  to $(0,0)$ which 
belong to  ${\mathbb{H}}^{(0)}_{x,M}$ and hence into ${\mathbb{H}}_{x,M}$. To see that this is enough to prove 
Eq.~(\ref{ddfs}) observe that according to Eq.~(\ref{comp3}) and (\ref{comp4}), for $x'\in ]0,\min\{x,1\}]$ the conditions 
under which $(x',M')$ can be included into ${\mathbb{H}}^{(0)}_{x,M}$ (and hence into ${\mathbb{H}}_{x,M}$) can be expressed as
\begin{eqnarray}  \label{uno0} 
0 \leq M' \leq M+x' - \min\{x,1\}\;, 
\end{eqnarray} 
 a region which is not empty if (\ref{EBconstraint}) holds true.
Observe also that according to Eqs.~(\ref{comp1}) and (\ref{comp2}), for all $(x_1,M_1)\in \left(\mathbb{R}^{+} \right)^{2}$, the  
points $(x',M')$ such that 
\begin{eqnarray} \label{uno1} 
 M' \geq \left( M_1+(x_1-1)\Theta(x_1-1)\right)x'/x_1\;, \qquad x' \in [0,\min\{x_1,1\}]\;,
\end{eqnarray} 
 are always included into ${\mathbb{L}}^{(0)}_{x_1,M_1}$ (and hence 
into ${\mathbb{L}}_{x_1,M_1}$). 
Taking then $(x,M)$ EB with $x > M_{\text{EB}}(x)$, and $(x_1,M_1)$ generic, we notice that 
for all $x'' \in ]0,\min\{x_1,1\}]$ and ${M}'':= \left( M_1+(x-1)\Theta(x_1-1)\right)x''/x$ the point 
$(x',\bar{M}')$ fulfils both Eq.~(\ref{uno0}) and (\ref{uno1}). Therefore we can write  
\begin{eqnarray}\left.
\begin{array}{l}(x'',{M}'') \in{\mathbb{H}}_{x,M} \;, \\ 
(x'',{M}'') \in {\mathbb{L}}_{x_1,M_1} \Longrightarrow 
(x_1,{M}_1) \in {\mathbb{H}}_{x'',{M}''} \;,
\end{array} \right\} \Longrightarrow (x_1,{M}_1) \in {\mathbb{H}}_{x,{M}}\;,
\end{eqnarray} 
where the first implication is a consequence of the complementary relation~(\ref{complementary}), and the second of the 
natural ordering~(\ref{Hide11}).

\end{document}